\tikzstyle{overbrace text style}=[font=\tiny, above, pos=.5, yshift=5pt]
\tikzstyle{overbrace style}=[decorate,decoration={brace,raise=5pt,amplitude=3pt}]
\newtheorem{theorem}{Theorem}[section]
\newtheorem{corollary}[theorem]{Corollary}
\newtheorem{lemma}[theorem]{Lemma}
\theoremstyle{definition}
\newtheorem*{comment*}{Comment}
\newtheorem{remark}{Remark}[section]
\newcommand{\SC}{\text{SW}}
\newcommand{\favorite}{\text{top}}
\title{\bf Metric Distortion of Obnoxious Distributed Voting}
\author{Alexandros A. Voudouris}
\date{School of Computer Science and Electronic Engineering \\ University of Essex, UK}
\begin{document}

\allowdisplaybreaks

\maketitle

\begin{abstract}
We consider a distributed voting problem with a set of agents that are partitioned into disjoint groups and a set of {\em obnoxious} alternatives. Agents and alternatives are represented by points in a metric space. The goal is to compute the alternative that {\em maximizes} the total distance from all agents using a two-step mechanism which, given some information about the distances between agents and alternatives, first chooses a representative alternative for each group of agents, and then declares one of them as the overall winner. Due to the restricted nature of the mechanism and the potentially limited information it has to make its decision, it might not be always possible to choose the optimal alternative. We show tight bounds on the {\em distortion} of different mechanisms depending on the amount of the information they have access to; in particular, we study full-information and ordinal mechanisms.  
\end{abstract}

\section{Introduction} \label{sec:intro}
Inspired by worst-case analysis, the distortion framework was introduced more than 15 years ago as a means of evaluating decision-making mechanisms for social choice problems (such as voting) where limited information (typically of ordinal nature) is available about the preferences of the agents over the alternative outcomes~\citep{distortion-survey}. With only a few exceptions, the vast majority of this literature has focused on settings in which the alternatives are considered {\em desirable}, and thus the goal is to choose one that either maximizes the total utility of the agents or minimizes their total cost. However, there are many important applications where the alternatives are {\em obnoxious}. For example, the alternatives might correspond to different locations for building a factory, in which case the agents would prefer the factory to be built as far away from their houses as possible. Consequently, this implies an underlying optimization problem where the goal is to {\em maximize} the total distance of the agents from the chosen location (rather than minimizing it). Such obnoxious facility location models have been studied through the lens of approximation algorithms~\citep{Tamir91}, operations research \citep{Church2022}, as well as mechanism design~\citep{facility-survey}. 

In this paper, we consider a more general obnoxious voting setting through the lens of the distortion framework. In our model, there are agents that have obnoxious metric preferences over alternatives and are partitioned into disjoint groups. Based on some available information about the preferences of the group members, each group proposes an alternative as a possible winner, and then one of them is chosen to be the final winner. Such distributed models have been considered in the past for the case of desirable alternatives \citep{ratsikas2020distributed,AFV22} aiming to capture district-based elections and other scenarios in which it might be logistically difficult to collect the preferences of the agents in a single pool. In the case of obnoxious alternatives that we focus in this work, such a model naturally captures problems where the goal is to locate obnoxious facilities in a metric space like in the example discussed above. We formalize the setting in the following. 

\subsection{Our Model}
We consider a metric voting problem with a set $N$ of $n$ {\em agents} and a set $A$ of $m$ {\em obnoxious alternatives}.
Agents and alternatives lie in a metric space $d$, that is, the distances between them satisfy the {\em triangle inequality}: $d(x,y) \leq d(x,z) + d(z,y)$, for any $x,y,z \in N \cup A$. 
Following a recent stream of works within the distortion literature~\citep{ratsikas2020distributed,AFV22}, we focus on a {\em distributed} setting in the sense that the agents are partitioned into a set $G$ of $k$ fixed {\em groups}; we denote by $n_g$ the size of group $g \in G$. 

The {\em social welfare} of an alternative $x$ is the total distance of the agents from $x$:
\begin{align*}
    \SC(x) = \sum_{i \in N} d(i,x) = \sum_{g \in G} \sum_{i \in g} d(i,x).
\end{align*}
Our goal is to choose the alternative with {\em maximum social welfare}. This decision is made by a mechanism $M$ that works generally by implementing the following two steps: 
\begin{itemize}
\item[(1)] For every group $g$, given some information about the distances between the agents in $g$ and the alternatives, $M$ chooses an alternative $r_g$ as the {\em representative} of $g$. 
\item[(2)] Given the set $R = \bigcup_g \{r_g\}$ of group representatives, $M$ chooses the final winner $w(M) \in R$. 
\end{itemize}
Clearly, when there is a single group of agents ($k=1$), the setting is {\em centralized}, and the representative of all the agents is the winner. When there are more groups, these two-step mechanisms essentially implement the idea that first each group proposes an alternative (the representative) as a possible winner based on the preferences of its members, and then one of these alternatives is chosen as the final winner. We should also emphasize that any such two-step mechanism satisfies an important locality property according to which the representative of a group must be the same no matter the composition of the other groups; in other words, the representative of a group only depends on the information available for the agents therein and there is no way of distinguishing similar groups based on the whole instance. 

We consider two classes of mechanisms depending on the type of information they use to make decisions: 
\begin{itemize}
    \item A {\em full-information} mechanism has access to the exact distances between agents and alternatives, that is, the mechanism knows the distance $d(i,x)$ for any $i \in N$ and $x \in A$.
    \item An {\em ordinal} mechanism has access to the ordinal preferences of the agents over the alternatives, that is, the mechanism knows a ranking $\succ_i$ for every agent $i$ such that $x \succ_i y$ implies that $d(i,x) \geq d(i,y)$. 
\end{itemize}

Due to the restricted nature of the mechanisms we consider (either due to their two-step decision making process, or due to having access to limited information about the metric space), the chosen alternative might not be the one that maximizes the social welfare. To capture this inefficiency, we adopt the notion of {\em distortion}, which is defined as the worst-case ratio (over all instances with $n$ agents, $m$ alternatives, and $k$ groups) between the maximum possible social welfare and the social welfare achieved by the mechanism:
\begin{align*}
\sup_{(N,A,d,G)} \frac{\max_{x \in A} \SC(x)}{\SC(w(M))}.
\end{align*}
The distortion is by definition at least $1$. Our objective is to identify the best-possible mechanisms with an as small distortion as possible, for each type of information. 

\subsection{Our Contribution}
We show tight bounds on the best possible distortion of {\em deterministic} full-information and ordinal mechanisms. 
We start with the class of full-information mechanisms in Section~\ref{sec:full} for which we show an upper bound of $2\min\{m,k\}-1$ via a mechanism that chooses the representative of each group to be the optimal alternative for the agents therein, and the final winner to be the alternative that represents the most agents. We show a matching lower bound for mechanisms that satisfy a natural assumption that resembles unanimity. While this distortion guarantee grows linearly with $\min\{m,k\}$ in general, it reduces to just $3$ for the important special case of a line metric since all alternatives besides just two can be eliminated by exploiting the geometry of the line. We complement this result with an unconditional lower bound of $3$ on the distortion of full-information mechanisms, thus completely resolving this fundamental case. 

We next turn our attention to the class of ordinal mechanisms in Section~\ref{sec:ordinal} and first consider the centralized setting with a single group of agents. We show a tight bound of $3$, which follows by the same mechanisms that achieve the best possible distortion of $3$ for the classic setting with desirable alternatives~\citep{gkatzelis2020resolving,kempe2022veto}. For general distributed instances with multiple groups of agents, we show a tight bound of $4\min\{m,k\}-1$ via a mechanism that chooses the representative of each group using the best-possible centralized mechanism (of distortion $3$), and the final winner to be the alternative that represents the most agents. We finally show a refined tight bound of $7$ for the line metric using similar techniques as in the case of full-information mechanisms. 

\subsection{Related Work}
Since its introduction by \citet{procaccia2006distortion}, 
the distortion has been studied for many different versions of fundamental social choice problems, 
including single-winner voting~\citep{anshelevich2018approximating,boutilier2015optimal,charikar24breaking,ebadian2022optimized,gkatzelis2020resolving,kempe2022veto}, 
multi-winner voting~\citep{caragiannis2017subset,CSV22,jaworski2020committees},
participatory budgeting~\citep{benade2021participatory}, 
facility location~\citep{feldman2016voting}, 
matching~\citep{anari2023matching,amanatidis2022matching,Amanatidis2024dice,filos2014RP,latifian2024approval}, 
and other optimization problems~\citep{anshelevich2016blind,burkhardt2024low}.

The distributed voting model, in which the agents are partitioned into disjoint groups and decisions are made by two-step mechanisms, was first considered by \citet{ratsikas2020distributed} for utilitarian voting with agents that have normalized values for desirable alternatives. More related to our work here are the papers of \citet{AFV22} and \citet{voudouris2023tight} who studied the problem for agents with metric preferences, but again for desirable alternatives. The authors showed an array of small constant bounds on the distortion of full-information and ordinal mechanisms for various social objectives. With the notable exception of the line metric, the distortion in our obnoxious model turns out to be a function of the number of alternatives and the number of groups, even when full information is available within the groups. The distributed model has also been considered for strategyproof facility location on a line~\citep{filos2024distributedFL}. The distortion of obnoxious metric voting has also been studied before by \citet{fotakis2022aligned} for the centralized setting and particular metric spaces with aligned alternatives. In contrast, we here consider a more general distributed setting (which includes the centralized one as a special case) and show bounds even for general metrics. 

\section{Full-information Mechanisms} \label{sec:full}
In this section we will show bounds on the distortion of full-information mechanisms which are given access to the distances between agents and alternatives in the metric space. 

\subsection{Upper Bounds}
Given this type of information, we can identify the alternative that maximizes the total distance from the agents within each group (and thus achieve distortion $1$ when $k=1$). Using this, for instances with $k \geq 2$ groups, we consider the {\sc Max-Weight-of-Optimal} mechanism, which works as follows: 
The representative of each group is the optimal alternative for the agents in the group.
The final winner is the alternative that represents the most agents (i.e., an alternative that maximizes the total size of the groups it represents).
See Mechanism~\ref{mech:full} for a description of the mechanism using pseudocode.

\newcommand\mycommfont[1]{\normalfont\textcolor{blue}{#1}}
\SetCommentSty{mycommfont}
\begin{algorithm}[th]
\SetNoFillComment
\caption{\sc Max-Weight-of-Optimal}
\label{mech:full}
{\bf Input:} Distances $d(i,x)$ for every $i \in N$ and $x \in A$\;
{\bf Output:} Winner $w$\;

\For{each $x \in A$}{
    $G_x \gets \varnothing$\;
}

\For{each group $g \in G$}{
    $r_g \gets \arg\max_{y \in A} \sum_{i \in g} d(i,y)$\;
    $G_{r_g} \gets G_{r_g} \cup \{g\}$\;
}
$w \gets \arg\max_{x \in A} \sum_{g \in G_x} n_g$\;
\end{algorithm}

\begin{theorem}\label{thm:unrestricted}
The distortion of {\sc Max-Weight-of-Optimal} is at most $2\min\{m,k\}-1$.
\end{theorem}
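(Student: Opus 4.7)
The plan is to bound $\SC(o)/\SC(w)$, where $o$ is the true maximum-social-welfare alternative and $w$ is the winner chosen by {\sc Max-Weight-of-Optimal}, by exploiting two facts: (i) pigeonhole on the set of representatives gives a lower bound on the number of agents represented by $w$, and (ii) for those agents, $w$ is locally optimal, which forces $w$ to be far from them and hence $\SC(w)$ to be large.

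\textbf{Step 1: Pigeonhole on representatives.} Let $R = \{r_g : g \in G\}$ be the set of distinct representatives chosen in the first step. Then $|R| \leq \min\{m,k\}$, so the alternative $w$ that maximizes $\sum_{g \in G_x} n_g$ must represent agents of total weight $T := \sum_{g \in G_w} n_g \geq n/\min\{m,k\}$. Let $S = \bigcup_{g \in G_w} g$, so $|S|=T$.

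\textbf{Step 2: Local optimality at $w$.} For every group $g \in G_w$, by definition $r_g=w$ is the optimal alternative for the agents in $g$, hence $\sum_{i\in g} d(i,w) \geq \sum_{i\in g} d(i,o)$. Summing over $g \in G_w$ yields
\begin{align*}
\sum_{i\in S} d(i,w) \;\geq\; \sum_{i\in S} d(i,o).
\end{align*}

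\textbf{Step 3: Bound $d(w,o)$ via the reverse triangle inequality.} From $d(i,o) \geq d(w,o) - d(i,w)$ and the inequality in Step 2,
\begin{align*}
\sum_{i\in S} d(i,w) \;\geq\; \sum_{i\in S} \bigl(d(w,o) - d(i,w)\bigr) \;=\; T\cdot d(w,o) - \sum_{i\in S} d(i,w),
\end{align*}
so $d(w,o) \leq \tfrac{2}{T}\sum_{i\in S} d(i,w) \leq \tfrac{2}{T}\SC(w)$.

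\textbf{Step 4: Bound $\SC(o)-\SC(w)$ for the unrepresented agents.} For every $i\notin S$, the triangle inequality gives $d(i,o) \leq d(i,w) + d(w,o)$; combined with Step 2 (which gives $\sum_{i\in S}(d(i,o)-d(i,w))\leq 0$), we obtain
\begin{align*}
\SC(o) - \SC(w) \;\leq\; (n-T)\cdot d(w,o) \;\leq\; \frac{2(n-T)}{T}\,\SC(w).
\end{align*}
Hence $\SC(o)/\SC(w) \leq (2n-T)/T = 2n/T - 1 \leq 2\min\{m,k\}-1$, using $T \geq n/\min\{m,k\}$ from Step 1.

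The only real obstacle is making sure the triangle-inequality bookkeeping in Steps 3--4 is clean: we must combine one direction of the triangle inequality (to bound $d(i,o)$ by $d(i,w)+d(w,o)$ for agents outside $S$) with the reverse direction (to bound $d(w,o)$ in terms of $\SC(w)$ using agents inside $S$), and the only leverage on $T$ is the pigeonhole count $|R|\leq \min\{m,k\}$, which is why both $m$ and $k$ appear in the final bound.
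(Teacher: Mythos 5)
Your proof is correct, and it takes a genuinely different (and somewhat leaner) route than the paper's. The paper routes all distances through the representatives: it uses the per-group optimality of \emph{every} representative $r_g$ (first with $x=o$ to bound $\SC(o)$ by $\sum_g\sum_{i\in g}d(i,r_g)$, then with $x=r$ for the farthest representative $r$ to bound $d(w,r)$), together with the weight inequality $(|R|-1)\sum_{g\in G_w}n_g \geq \sum_{g\notin G_w}n_g$. You instead bound $d(w,o)$ directly via the reverse triangle inequality, using only the local optimality of $w$ within the groups it represents, and then handle all unrepresented agents with the plain triangle inequality $d(i,o)\leq d(i,w)+d(w,o)$; the counting ingredient $T\geq n/|R|$ is algebraically equivalent to the paper's weight inequality, so the combinatorial content is the same. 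The upshot is that your argument never invokes the optimality of the losing representatives $r_g$ for $g\notin G_w$ at all, which makes it slightly more general (it would apply verbatim to any tie-breaking of the second step that still guarantees $T\geq n/|R|$), while the paper's version is structured so that its skeleton transfers to the ordinal mechanism in Theorem~\ref{thm:distributed-perfect-matching-4q-1}, where the representative of each group is only approximately optimal and the detour through $r_g$ and the farthest representative becomes necessary. Both arguments yield exactly the bound $2\min\{m,k\}-1$.
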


\begin{proof}
Let $w$ be the alternative chosen as the winner by the mechanism, and $o$ an optimal alternative. 
Let $R$ be the set of representatives, and denote by $G_x$ the set of groups that $x \in R$ represents; 
note that $|R| \leq  \min\{k,m\}$. 
Since the representative $r_g$ of group $g$ is the optimal alternative for the agents in $g$, we have
\begin{align}\label{eq:full:optimal-in-groups}
    \forall x \in A: \sum_{i \in g} d(i,x) \leq \sum_{i \in g} d(i,r_g). 
\end{align}
Also, by the definition of the mechanism, $w$ maximizes the total size of the groups it represents, and thus
$\sum_{g \in G_w} n_g \geq \sum_{g \in G_x} n_g$ for every $x \in R$. 
By adding these inequalities, we have
\begin{align}\label{eq:max-weight}
    (|R|-1) \sum_{g \in G_w} n_g \geq \sum_{g \not\in G_w} n_g.
\end{align}
Let $r \in \arg\max_{y \in R} d(w,y)$ be a most-distant group representative from $w$. 
Using \eqref{eq:full:optimal-in-groups} with $x=o$, and the triangle inequality, we can upper-bound the optimal social welfare:
\begin{align*}
\SC(o) = \sum_g \sum_{i \in N_g} d(i,o) 
&\leq \sum_{g \in G} \sum_{i \in N_g} d(i,r_g) \\
&\leq \sum_{g \in G} \sum_{i \in N_g} \bigg( d(i,w) + d(w,r_g) \bigg) \\
&\leq  \SC(w) + d(w,r) \cdot \sum_{g\not\in G_w} n_g.
\end{align*}
We now bound the second term using \eqref{eq:max-weight}, the triangle inequality, and \eqref{eq:full:optimal-in-groups} with $x=r$:
\begin{align*}
d(w,r) \cdot \sum_{g\not\in G_w} n_g 
&\leq d(w,r) \cdot (|R|-1) \sum_{g\in G_w} n_g \\
&= (|R|-1) \sum_{g\in G_w} \sum_{i \in g} d(w,r) \\
&\leq (|R|-1) \sum_{g\in G_w} \sum_{i \in g} \bigg( d(i,w) + d(i,r) \bigg) \\
&\leq (|R|-1) \bigg( \SC(w) + \sum_{g\in G_w} \sum_{i \in g} d(i,w) \bigg) \\
&=  2(|R|-1) \cdot \SC(w). 
\end{align*}
Putting everything together, we have
\begin{align*}
    \SC(o) \leq \bigg( 1 + 2(|R|-1) \bigg) \cdot \SC(w) = (2|R|-1) \cdot \SC(w). 
\end{align*}
The theorem follows since $|R| \leq \min\{m,k\}$.
\end{proof}

By Theorem~\ref{thm:unrestricted}, we obtain distortion upper bounds for several cases, such as when the number of alternatives $m$ or the number of groups $k$ is small. The case $m=2$ is particularly interesting since it also implies an upper bound of $3$ for the line metric. By the geometry of the line, it is not hard to observe that the optimal alternative, which maximizes the total distance from the agents, is either the leftmost or the rightmost alternative. Hence, all other alternatives can be eliminated, and an upper bound of $3$ is directly obtained by setting $m=2$. 

\begin{corollary}
When there are only two alternatives or when the metric space is a line, the distortion of {\sc Max-Weight-of-Optimal} is at most $3$. 
\end{corollary}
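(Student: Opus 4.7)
My plan is to handle the two cases separately, reducing each to a situation where the set $R$ of group representatives satisfies $|R| \le 2$, so that Theorem~\ref{thm:unrestricted} directly yields a distortion bound of $2|R|-1 \le 3$.

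The first case ($m=2$) is essentially immediate, since the bound $|R| \le \min\{m,k\}$ from the proof of Theorem~\ref{thm:unrestricted} already forces $|R| \le 2$, and a direct application of the theorem gives distortion at most $3$.

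For the line-metric case, the substantive step is to show that every group representative $r_g$ chosen by {\sc Max-Weight-of-Optimal} must be either the leftmost alternative $\ell$ or the rightmost alternative $r$, so that again $R \subseteq \{\ell,r\}$ and $|R| \le 2$. I would establish this through a convexity argument: for each agent $i$, the function $x \mapsto d(i,x)$ is (a shift of) an absolute value function on the line, hence convex, so the per-group objective $f_g(x) := \sum_{i \in g} d(i,x)$ that $r_g$ maximizes is also convex in $x$. Writing any intermediate alternative $x_j$ as a convex combination $\lambda \ell + (1-\lambda) r$ with $\lambda \in [0,1]$, convexity then gives $f_g(x_j) \le \lambda f_g(\ell) + (1-\lambda) f_g(r) \le \max\{f_g(\ell), f_g(r)\}$, so the maximizer of $f_g$ over $A$ is always attained at $\ell$ or $r$. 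Consequently $r_g \in \{\ell, r\}$ for every group $g$, giving $|R| \le 2$, and Theorem~\ref{thm:unrestricted} again yields a distortion of at most $3$.

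I do not anticipate any real obstacle here: the only nontrivial ingredient is the one-line convexity observation, and once $|R| \le 2$ is in hand the corollary follows by direct invocation of the previously proved theorem. An alternative phrasing would be to note that convexity already forces any \emph{globally} optimal alternative to lie in $\{\ell, r\}$, so one can delete all other alternatives from the instance without changing either the optimum or the behavior of the mechanism, reducing the line case to the $m=2$ case; I would probably include this remark for emphasis but the main argument remains the same.
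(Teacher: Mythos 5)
Your proposal is correct and rests on the same idea as the paper's own (one-paragraph) justification: the paper invokes ``the geometry of the line'' to note that the total-distance objective is maximized at the leftmost or rightmost alternative, eliminates all others, and applies Theorem~\ref{thm:unrestricted} with $m=2$ --- which is exactly the ``alternative phrasing'' you mention at the end, while your primary framing (per-group convexity giving $|R|\le 2$ directly) is just a slightly more explicit route to the same bound $2|R|-1\le 3$. The only cosmetic caveat, which affects the paper's version equally, is that ties in $\arg\max_{y\in A}\sum_{i\in g}d(i,y)$ could in principle be broken in favor of an interior alternative, so one should either break ties toward the extreme alternatives or restrict the mechanism to $\{\ell,r\}$ up front, as the paper implicitly does.
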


\subsection{Lower Bounds}
We now show lower bounds on the distortion of full-information mechanisms. 
For general metrics, we show a lower bound of $2\min\{m,k\}-1$ for the class of {\em group-unanimous} mechanisms that satisfy the following property: For any group in which {\em all} agents are co-located, the representative of the group is chosen to be the alternative that is farthest from the agents. This property is satisfied by most natural mechanisms, including {\sc Max-Weight-of-Optimal}, but is not true in general. 

\begin{theorem}\label{thm:lower:unrestricted:general:tie}
The distortion of any full-information group-unanimous mechanism is at least $2\min\{m,k\}-1$.
\end{theorem}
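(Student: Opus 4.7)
The plan is to exhibit, for any group-unanimous mechanism, a family of instances on which the distortion approaches $2L-1$ where $L = \min\{m,k\}$. The high-level idea is to design an instance with $L$ co-located groups so that unanimity forces each group to have a distinct representative, to designate one of these representatives (call it $w$) as the ``bad'' one whose social welfare is roughly a factor $2L-1$ smaller than that of any other representative, and then to use the freedom in group sizes and tie-breaking to coerce the mechanism into selecting $w$ as the winner.

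Concretely, I would take $k = L$ groups and $m = L$ alternatives labeled $w, a_2, \ldots, a_L$, with the agents of group $g_j$ all co-located at a common point $p_j$. The distances are then set, up to small perturbations $\epsilon, \delta, \eta > 0$, by $d(p_1, w) = 1 + \epsilon$ and $d(p_1, a_j) = 1$ for $j \geq 2$, and for each $j \geq 2$ by $d(p_j, w) = \delta$, $d(p_j, a_j) = 2 + \epsilon - \delta + \eta$, and $d(p_j, a_k) = 2 + \epsilon - \delta$ for $k \geq 2$, $k \neq j$. The remaining distances between alternatives and among the $p_j$'s are pinned down (up to some slack) by the triangle inequality; a routine but slightly delicate verification shows that for small enough $\epsilon, \delta, \eta$ the resulting distance function defines a valid metric. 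The role of the perturbations is precisely to make the farthest alternative from each $p_j$ uniquely determined, so that unanimity forces $r_1 = w$ and $r_j = a_j$ for every $j \geq 2$, yielding $R = \{w, a_2, \ldots, a_L\}$.

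A direct computation gives $\SC(w) \to n_1$ and $\SC(a_j) \to n_1 + 2 \sum_{k \geq 2} n_k$ in the limit $\epsilon, \delta, \eta \to 0$. With all $n_j = 1$ this yields $\SC(a_j)/\SC(w) \to 2L - 1$. To ensure that the mechanism's winner-selection rule actually selects $w$ rather than some better-welfare $a_j$, I would tune the weights adversarially to the given mechanism: for a weight-based mechanism such as \textsc{Max-Weight-of-Optimal}, setting $n_1 = T + 1$ and $n_j = T$ for $j \geq 2$ and sending $T \to \infty$ makes $w$ the unique maximum-weight representative while preserving the asymptotic welfare ratio $2L - 1$, and for any other group-unanimous mechanism the adversary uses an analogous weight perturbation targeted at the specific tie-breaking rule so that the winner is forced to be $w$. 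Combining the two ingredients, the chosen winner has welfare approaching $1$ while the optimum has welfare approaching $2L - 1$, so the distortion tends to $2L - 1$.

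The main obstacle is coupling the two steps cleanly: the metric has to be arranged so that several triangle inequalities are essentially tight (this is what pushes the welfare ratio all the way up to $2L - 1$), while the perturbations $\epsilon, \delta, \eta$ must simultaneously be non-trivial so as to pin down the unanimity-forced representatives and break ties in the adversary's favor. The argument is therefore essentially a limiting one: the bound is attained as a supremum over the perturbed family of instances rather than on any single one, and the book-keeping of the slack variables is what makes the verification tedious rather than conceptually difficult.
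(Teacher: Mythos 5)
Your metric construction and the welfare calculation are essentially sound and mirror the paper's own instance (a ``hub'' alternative $w$ on which groups $2,\ldots,L$ sit, represented by the one group that is far from everything, giving $\SC(a_j)/\SC(w)\to 2L-1$), and group-unanimity does pin down all $L$ distinct representatives as you claim. The gap is in the last step, where you must force the winner-selection stage to output $w$. Group-unanimity constrains only how a representative is chosen for a co-located group; it says nothing about the second stage, which may be an arbitrary function of the representatives and the group sizes. Your plan to ``tune the weights adversarially'' (e.g.\ $n_1=T+1$, $n_j=T$) works for \textsc{Max-Weight-of-Optimal} but not for an arbitrary mechanism: once the weights are unequal, the group whose representative is the bad alternative becomes distinguishable, and a mechanism whose second stage simply never selects the representative of the unique heaviest group will pick some $a_j$, which in your instance is near-optimal, so the distortion collapses to $1$. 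The assertion that ``an analogous weight perturbation'' exists for every mechanism is exactly the statement that needs proof, and it is false in general.

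The paper closes this hole with a symmetry argument rather than weight tuning: all groups have the same size $\lambda$, each is represented by a distinct alternative, and the alternatives are pairwise equidistant, so the data available to the winner-selection stage is invariant under permuting the alternatives; one may therefore assume without loss of generality that the winner is the representative of the last group, and only \emph{then} build the metric so that precisely that representative is the bad hub. The order of quantifiers is the point: fix the symmetric combinatorial data, let the mechanism commit to a winner, and construct the metric afterwards to punish that choice. Your proposal designates the bad alternative first and then tries to coerce the mechanism towards it, which is the step that cannot be completed; note that you already computed the ratio $2L-1$ with equal weights, so the repair is to drop the weight perturbation entirely and argue by relabelling instead. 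A smaller omission is that you only treat $m=k=L$; the cases $m>k$ and $m<k$ require dummy alternatives or several groups per representative, as in Remark~\ref{remark} of the paper.
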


\begin{proof}
We consider instances with $m=k$ alternatives $\{x_1,\ldots,x_k\}$ that are located in the metric space so that they are all at distance $1$ from each other. Given that the symmetry between the alternatives, for any instance consisting of groups that are of the same size $\lambda$ and are represented by different alternatives (in particular, group $g_j$ is represented by alternative $x_j$ for any $j \in [k]$), we can assume without loss of generality that the winner is the representative of the last group. 
Now consider a specific instance with the following groups:
\begin{itemize}
    \item For $j \in [k-1]$, group $g_j$ consists of $\lambda$ agents that are located near alternative $x_k$ such that $x_j$ is uniquely the farthest alternative.
    \item Group $g_k$ consists of $\lambda$ agents that are located at distance nearly $1/2$ from all alternatives, such that $x_k$ is the farthest alternative.
\end{itemize}
Due to the structure of the groups and group-unanimity, the representative of group $g_j$ is $x_j$ for any $j \in [k]$, and thus the overall winner is $x_k$. Since $\SC(x_k) \approx \lambda/2$ and $\SC(x_j) \approx (k-1)\lambda + \lambda/2$, we obtain a lower bound of  $2k-1$.
\end{proof}

\begin{remark}\label{remark}
The proof of Theorem~\ref{thm:lower:unrestricted:general:tie} relies on an instance with $m=k$ and leads to a lower bound of $2k-1$. It is not hard to observe that the lower bound for any $m \geq k$ by extending the instance to include extra dummy alternatives that are not used. To obtain a lower bound for $m < k$, we can focus on the case where $k$ is a multiple of $m$ and modify the instance in the proof of Theorem~\ref{thm:lower:unrestricted:general:tie} so that there are $k/m$ groups (rather than just one) represented by alternative $x_j$, for any $j \in [m]$; this leads to a bound of $2m-1$. 
\hfill $\qed$
\end{remark}

We next show an unconditional tight lower bound of $3$ on the distortion of mechanisms for when there are only two alternatives and the metric is a line. 

\begin{theorem}\label{thm:unrestricted:lower:3:wlog}
The distortion of any full-information mechanism is at least $3$, even when the metric space is a line and there are only two alternatives.     
\end{theorem}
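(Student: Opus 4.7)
The plan is to argue by contradiction: suppose a full-information mechanism $M$ attains distortion strictly less than $3$ on every instance, and derive an instance on which the distortion is at least $3-o(1)$. I would place the two alternatives at positions $0$ and $1$ on the line, write $\lambda$ for a large multiplicity used in group sizes, and carry out a case analysis on $M$'s representative choices on certain carefully constructed groups.

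First I establish a forced-unanimity observation for extreme co-located groups: if a group $g$ consists of $\lambda$ agents all at the same position $p$, then in the instance with two identical copies of $g$ every representative coincides, and thus equals the winner; hence a representative different from the farther alternative would give distortion $(1-p)/p$ (for $p\le 1/2$) or $p/(1-p)$ (for $p\ge 1/2$), which is at least $3$ whenever $p\le 1/4$ or $p\ge 3/4$. In particular, $M$ must assign rep $b$ to any group of $\lambda$ agents at position $0$ and rep $a$ to any group of $\lambda$ agents at position $1$, with no freedom.

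The main construction fixes a small parameter $\delta\in(0,1/4]$ and examines the mechanism's representative on the two borderline co-located groups: write $r_-$ and $r_+$ for $M$'s representative on the group of $\lambda$ agents all at $1/2-\delta$ and all at $1/2+\delta$ respectively. Both are in $\{a,b\}$, giving four cases. In the symmetric cases $r_-=r_+$, the instance consisting of two identical copies of whichever borderline group was assigned the \emph{closer} alternative directly gives distortion $(1/2+\delta)/(1/2-\delta)$, which tends to $3$ as $\delta\to 1/4^-$. In the asymmetric cases $r_-\ne r_+$, I pair a forced-rep group with a borderline group in two different ways: instance~A combines the ``all at $0$'' group (rep $b$) with whichever borderline group has rep $a$, while instance~B combines the ``all at $1$'' group (rep $a$) with the borderline group having rep $b$. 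In both A and B the multiset of (representative, weight) pairs presented to step~2 is identical, so its deterministic rule must produce the same winner $w_M$; but the two instances have opposite optima whose ratios both approach $3$ as $\delta\to 0$, so whichever alternative $w_M$ is, one of them realizes distortion tending to $3$.

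The main obstacle is justifying that step~2 really sees the same input across the paired instances in the asymmetric case. This is exactly where the locality of the two-step model is exploited: step~2's output depends only on the tuple of (group, representative) pairs together with group sizes, and never on the within-group distance profiles. If any concern about group labeling remains, one can simply relabel groups in the adversary's construction so that the $a$-rep group always occupies slot $g_1$ and the $b$-rep group slot $g_2$; the labeled input to step~2 is then literally the same in A and B, forcing the same winner $w_M$ in both instances and completing the contradiction.
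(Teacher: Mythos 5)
There is a genuine gap: your construction only uses (copies of) single co-located groups and two-group instances pairing a forced group at $0$ or $1$ with a borderline group at $1/2\pm\delta$, and this family of instances does not suffice to force distortion $3$ against every mechanism. Concretely, consider a mechanism whose representative rule for a co-located group at position $p$ is: alternative $1$ if $p<0.6$ and alternative $0$ if $p\geq 0.6$ (consistent with your forced-unanimity observation), with any fixed step-two rule. For every $\delta<0.1$ your borderline groups fall into the symmetric case $r_-=r_+=1$, where your witness gives ratio $(1/2+\delta)/(1/2-\delta)<1.5$; for $\delta\geq 0.1$ they fall into the asymmetric case, where your A/B pairing gives ratio $(3/2-\delta)/(1/2+\delta)\leq 1.4/0.6\approx 2.33$. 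The two branches of your case analysis need opposite limits ($\delta\to 1/4^-$ for the symmetric case, $\delta\to 0$ for the asymmetric case), but which case holds is itself a function of $\delta$, so a mechanism whose ``switching point'' sits anywhere in $(1/2,3/4)$ other than $1/2$ slips between them. Your A/B pairing needs a position $q\leq 1/2+O(\epsilon)$ whose group is represented by $0$ \emph{and} a position $q'\geq 1/2-O(\epsilon)$ whose group is represented by $1$; the mechanism above simply refuses to supply the former.

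This is exactly the difficulty the paper's proof is built to overcome: Lemmas~\ref{lem:unrestricted:majority} and~\ref{lem:unrestricted:g1/2+} run an induction over instances with $2\ell+1$ groups ($\ell$ represented by $0$ versus $\ell+1$ by $1$) to show that a sub-$3$-distortion mechanism must (i) let the majority representative win in such unbalanced profiles and, as a consequence, (ii) assign representative $0$ to the co-located group at position $(2\ell+3)/(4\ell+4)$, a sequence that decreases to $1/2$. Only after the switching point has been pinned to $1/2+\varepsilon$ does the final two-group instance (your instance~A, essentially) deliver the ratio $3$. Your locality argument for why step two sees identical inputs in the paired instances is sound and matches the paper's without-loss-of-generality setup, and your forced-unanimity observation is Lemma~\ref{lem:unrestricted:small-or-large-x}; what is missing is the multi-group induction that eliminates mechanisms with a displaced threshold.
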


The rest of this section is dedicated to proving this theorem. Our methodology is based on induction and is similar to the methodology of \citet{AFV22} for showing lower bounds for the line metric in the desirable setting. Throughout the proof, we consider instances with two alternatives located at $0$ and $1$ on the line of real numbers. Without loss of generality, we assume that, if there are just two groups and each of them has a different representative (that is, one is represented by alternative $0$ while the other is represented by alternative $1$), then the winner is $0$. We will now argue that any low-distortion mechanism must choose a particular representative for the group $g_x$ in which all agents are located at $x \in [0,1]$. To simplify our calculations in the following, we assume that there is just one agent in such groups (where all agents are co-located). 

\begin{lemma}\label{lem:unrestricted:small-or-large-x}
Any full-information mechanism with distortion strictly smaller than $3$ 
must choose the representative of group $g_x$ to be 
alternative $0$ for $x \in [3/4,1]$
and alternative $1$ for  $x \in [0,1/4]$.
\end{lemma}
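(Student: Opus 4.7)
The plan is to pin down the representative of $g_x$ by analyzing the centralized instance that contains only the single group $g_x$, and then to derive the claim by a direct computation of the induced social welfare on that instance.

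I would first invoke the locality property of two-step mechanisms stated in the model description: the representative chosen for a group depends only on the information available about that group. Hence $r_{g_x}$ is a well-defined function of $x$ alone, identical across all instances containing $g_x$. In particular, applied to the instance whose only group is $g_x$, the set of representatives is a singleton, so the mechanism's overall winner must coincide with $r_{g_x}$.

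It then remains to compute the two social welfare values on this single-group instance. Since $g_x$ contains a single agent located at $x$, we have $\SC(0) = x$ and $\SC(1) = 1-x$. For $x \in [3/4, 1]$ the optimum is $0$, so choosing $r_{g_x} = 1$ would produce distortion $x/(1-x) \geq 3$, contradicting the strict bound; hence $r_{g_x} = 0$. The case $x \in [0, 1/4]$ is entirely symmetric and forces $r_{g_x} = 1$.

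The only delicate point is the two boundary values $x = 3/4$ and $x = 1/4$, where the offending ratio equals exactly $3$ rather than strictly exceeding it; here the strictness of the distortion hypothesis is what actually rules these cases out, since an equality $\text{distortion} = 3$ is already excluded. Beyond this minor observation, the argument is a direct calculation and I do not anticipate any serious obstacle.
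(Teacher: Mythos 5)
Your argument is correct and follows essentially the same route as the paper's proof: isolate $g_x$ so that its representative is forced to be the overall winner, compute $\SC(0)=x$ and $\SC(1)=1-x$, and derive the contradiction from the ratio $x/(1-x)\geq 3$ (with the boundary case $x=3/4$ handled by the strictness of the distortion hypothesis, exactly as the paper does). The only cosmetic difference is that the paper uses an instance made of several copies of $g_x$ rather than a single-group instance, which keeps the number of groups arbitrary; your appeal to locality makes the same point explicit and the calculation is unchanged.
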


\begin{proof}
Let $x\in [3/4,1]$; the case $x \in [0,1/4]$ is symmetric. Suppose otherwise that the representative of $g_x$ is $1$ instead of $0$. Then, in any instance consisting of copies of $g_x$ (which means that all agents are located at $x$), 
$1$ is the only representative, and thus the overall winner. 
Since $\SC(0) = n x$ and $\SC(1) = n(1-x)$, the distortion is at least $x/(1-x) \geq 3$, a contradiction.
\end{proof}

\begin{lemma}\label{lem:unrestricted:majority}
Suppose a full-information mechanism with distortion strictly smaller than $3$ which chooses alternative $1$ as the winner when there are $2\ell+1$ groups such that $\ell$ of them are represented by $0$ while the remaining $\ell+1$ are represented by $1$, for any $\ell \geq 1$. Then, the mechanism must choose alternative $0$ as the representative of group $g_{(2\ell+3)/(4\ell+4)}$.
\end{lemma}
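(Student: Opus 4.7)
The plan is to proceed by contradiction. Assume that the mechanism assigns alternative $1$ (rather than $0$) as the representative of $g_x$ with $x=(2\ell+3)/(4\ell+4)$. The strategy is to build a single instance that simultaneously triggers the hypothesized behavior of the lemma and forces a worst-case ratio of $3$.

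Specifically, I would form an instance with exactly $2\ell+1$ groups: $\ell$ copies of $g_1$ (all agents at location $1$) together with $\ell+1$ copies of $g_x$. Since $1\in[3/4,1]$, Lemma~\ref{lem:unrestricted:small-or-large-x} forces the representative of each $g_1$ to be alternative $0$, while by the contradiction assumption the representative of each $g_x$ is alternative $1$. Thus there are $\ell$ groups represented by $0$ and $\ell+1$ groups represented by $1$, matching exactly the configuration in the hypothesis of the lemma; the mechanism therefore picks alternative $1$ as the overall winner.

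It then remains to compute the social welfares. Because contributions decompose group by group, $\SC(0) = \ell\cdot 1 + (\ell+1)\cdot x$ and $\SC(1) = \ell\cdot 0 + (\ell+1)\cdot(1-x)$. A direct substitution of $x=(2\ell+3)/(4\ell+4)$ collapses both expressions to $\SC(0)=3(2\ell+1)/4$ and $\SC(1)=(2\ell+1)/4$, so the ratio $\SC(0)/\SC(1)$ equals $3$. Since the winner is $1$, this instance witnesses distortion at least $3$, contradicting the assumed distortion strictly smaller than $3$.

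There is no genuine obstacle here; the whole content sits in the choice of instance. Placing the $\ell$ decoy groups at $1$ plays a dual role: it ensures their representative is $0$ via Lemma~\ref{lem:unrestricted:small-or-large-x}, and it simultaneously maximizes their contribution to $\SC(0)$ while zeroing out their contribution to $\SC(1)$, which is what allows the ratio to be pushed to $3$. The specific value $x=(2\ell+3)/(4\ell+4)$ is simply reverse-engineered by solving $\ell+(\ell+1)x = 3(\ell+1)(1-x)$, i.e.\ by setting $\SC(0)/\SC(1)=3$ under this group composition.
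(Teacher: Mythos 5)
Your proposal is correct and is essentially the paper's own proof: the same instance ($\ell$ copies of $g_1$ and $\ell+1$ copies of $g_{(2\ell+3)/(4\ell+4)}$), the same application of Lemma~\ref{lem:unrestricted:small-or-large-x} and of the hypothesis to force winner $1$, and the same welfare computation yielding a ratio of $3$. The closing remark about reverse-engineering $x$ from $\ell+(\ell+1)x=3(\ell+1)(1-x)$ is accurate added exposition, not a deviation.
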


\begin{proof}
Suppose otherwise that $1$ is chosen as the representative of group $g_{(2\ell+3)/(4\ell+4)}$. 
Then, consider an instance consisting of $2\ell+1$ groups such that there are $\ell$ copies of $g_1$ and $\ell+1$ copies of $g_{(2\ell+3)/(4\ell+4)}$. By Lemma~\ref{lem:unrestricted:small-or-large-x}, the representative of $g_1$ must be $0$, while the representative of $g_{(2\ell+3)/(4\ell+4)}$ is $1$ by assumption. Hence, we have an instance in which there are $\ell$ groups represented by $0$ and $\ell+1$ represented by $1$, leading to $1$ being chosen as the winner by the mechanism due to the assumption of the statement. 
However, 
\begin{align*}
    \SC(1) = (\ell+1) \bigg( 1- \frac{2\ell+3}{4\ell+4} \bigg) = \frac{2\ell+1}{4},
\end{align*}
whereas
\begin{align*}
    \SC(0) = \ell + (\ell+1) \cdot \frac{2\ell+3}{4\ell+4} = \frac{6\ell+3}{4},
\end{align*}
leading to a distortion of $3$, a contradiction. 
\end{proof}

Using Lemmas~\ref{lem:unrestricted:small-or-large-x} and~\ref{lem:unrestricted:majority}, we can now argue that the assumption of Lemma~\ref{lem:unrestricted:majority} is true for any low-distortion mechanism. 

\begin{lemma}\label{lem:unrestricted:g1/2+}
Any full-information mechanism with distortion strictly smaller than $3$ must choose alternative $1$ as the winner when there are $2\ell+1$ groups such that $\ell$ of them are represented by $0$ while the remaining $\ell+1$ are represented by $1$, for any $\ell \geq 1$.
\end{lemma}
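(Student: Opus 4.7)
The plan is to prove the lemma by induction on $\ell \geq 1$, at each step exhibiting a concrete three-tier instance on which the assumption that the winner is $0$ would force distortion at least $3$.

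The key ingredient is that the representative of the group $g_{x_\ell}$ with $x_\ell = (2\ell+1)/(4\ell)$ is forced to be $0$. For $\ell = 1$ this is because $x_1 = 3/4$, so Lemma~\ref{lem:unrestricted:small-or-large-x} directly applies. For $\ell \geq 2$, the inductive hypothesis gives the conclusion of the lemma at index $\ell - 1$, and Lemma~\ref{lem:unrestricted:majority} (invoked with index $\ell - 1$) then pins down $0$ as the representative of $g_{(2\ell+1)/(4\ell)}$. The two justifications fit together neatly because the position produced by Lemma~\ref{lem:unrestricted:majority} for the formal index $\ell - 1 = 0$ would be $3/4$, which already sits on the boundary of the range covered by Lemma~\ref{lem:unrestricted:small-or-large-x}.

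With $x_\ell$ thus fixed, suppose for contradiction that in the configuration of the lemma the winner is $0$. The plan is to consider the instance consisting of $\ell$ copies of $g_{x_\ell}$ (each with representative $0$, as argued above) together with $\ell + 1$ copies of $g_0$ (each with representative $1$, by Lemma~\ref{lem:unrestricted:small-or-large-x}). This instance has exactly $\ell$ representatives $0$ and $\ell + 1$ representatives $1$, so by the contradiction hypothesis the mechanism returns $0$. A brief calculation then yields $\SC(0) = (2\ell+1)/4$ and $\SC(1) = (6\ell+3)/4$, giving $\SC(1)/\SC(0) = 3$ and contradicting the assumption of distortion strictly below $3$.

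There is no real obstacle here: the choice of $x_\ell$ in Lemma~\ref{lem:unrestricted:majority} is calibrated precisely so that the final social welfare ratio lands at $3$, and our argument essentially runs that computation in reverse. The one point worth checking is that the base case $\ell = 1$ really does bootstrap from Lemma~\ref{lem:unrestricted:small-or-large-x} alone, without appealing to a non-existent $\ell = 0$ instance of Lemma~\ref{lem:unrestricted:majority}; this works because $x_1 = 3/4$ lies exactly on the boundary of the interval $[3/4,1]$ handled by Lemma~\ref{lem:unrestricted:small-or-large-x}.
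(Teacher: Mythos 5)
Your proposal is correct and takes essentially the same route as the paper's proof: induction on $\ell$, with the representative of $g_{(2\ell+1)/(4\ell)}$ forced to be $0$ via Lemma~\ref{lem:unrestricted:small-or-large-x} in the base case ($x_1=3/4$) and via the inductive hypothesis combined with Lemma~\ref{lem:unrestricted:majority} in the step, followed by the instance of $\ell$ copies of $g_{(2\ell+1)/(4\ell)}$ and $\ell+1$ copies of $g_0$ whose welfare ratio $(6\ell+3)/(2\ell+1)=3$ yields the contradiction. Your computations and the handling of the base case match the paper exactly.
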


\begin{proof}
We will prove the statement by induction using Lemmas~\ref{lem:unrestricted:small-or-large-x} and~\ref{lem:unrestricted:majority} repeatedly. 

\medskip
\noindent
{\bf Base case: $\ell=1$.}
Suppose otherwise that the mechanism chooses $0$ as the winner when there are $3$ groups such that one of them is represented by $0$ while the other two are represented by $1$. Consider an instance consisting of $g_{3/4}$ and two copies of $g_0$. By Lemma~\ref{lem:unrestricted:small-or-large-x}, the representative of $g_{3/4}$ must be $0$ (using $x=3/4$), while the representative of $g_0$ must be $1$ (using $x=0$). Hence, the overall winner is $0$. However, $\SC(0) = 3/4$ and $\SC(1) = 1/4 + 2 = 9/4$, leading to a distortion of $3$, a contradiction. 

\medskip
\noindent 
{\bf Hypothesis:} We assume that the statement is true for $\ell-1$, that is, the winner is $1$ when there are $2(\ell-1)+1 = 2\ell-1$ groups such that $\ell-1$ of them are represented by $0$ while the remaining $\ell$ are represented by $1$. Hence, by Lemma~\ref{lem:unrestricted:majority}, the mechanism must choose $0$ as the representative of group $g_{(2(\ell-1)+3)/(4(\ell-1)+4)} = g_{(2\ell+1)/(4\ell)}$.

\medskip
\noindent 
{\bf Induction step:}
Suppose otherwise that the mechanism chooses $0$ as the winner when there are $2\ell+1$ groups such that $\ell$ of them are represented by $0$ while the remaining $\ell+1$ are represented by $1$. Consider an instance consisting of 
$\ell$ copies of $g_{(2\ell+1)/(4\ell)}$ which are represented by $0$ due to our induction hypothesis, 
and $\ell+1$ copies of $g_0$ which are represented by $1$ by Lemma~\ref{lem:unrestricted:small-or-large-x}. 
Since
\begin{align*}
    \SC(0) = \ell \cdot \frac{2\ell+1}{4\ell} = \frac{2\ell+1}{4}
\end{align*}
and
\begin{align*}
    \SC(1) = \ell \cdot \bigg( 1 -  \frac{2\ell+1}{4\ell} \bigg) + \ell+1 = \frac{6\ell+3}{4},
\end{align*}
the distortion is at least $3$, a contradiction. 
\end{proof}

We are now ready to prove Theorem~\ref{thm:unrestricted:lower:3:wlog}. 

\begin{proof}[Proof of Theorem~\ref{thm:unrestricted:lower:3:wlog}]
By Lemma~\ref{lem:unrestricted:g1/2+} and Lemma~\ref{lem:unrestricted:majority}, we have that any mechanism with distortion strictly smaller than $3$ must choose $0$ as the representative of group $g_{(2\ell+3)/(4\ell+4)}$ for any $\ell \geq 1$. Taking $\ell$ to infinity, we have that the mechanism chooses $0$ as the representative of group $g_{1/2+\varepsilon}$ for some infinitesimal $\varepsilon > 0$. 

Now consider an instance that consists of $g_{1/2+\varepsilon}$ and $g_0$. Since the former is represented by $0$ and the latter is represented by $1$ (due to Lemma~\ref{lem:unrestricted:small-or-large-x}), the winner is $0$. However, since $\SC(0) \approx 1/2$ and $\SC(1) \approx 3/2$, the distortion is at least $3$. 
\end{proof}

\section{Ordinal Mechanisms} \label{sec:ordinal}
We now turn our attention to ordinal mechanisms which are given access to the ordinal preferences of the agents over the alternatives. 

\subsection{Upper Bounds}
We start with the centralized setting, in which there is a single group of agents, and show that it is possible to achieve a tight bound of $3$. As in the classic setting where the alternatives are desirable (rather than obnoxious), the bound is achieved by choosing any alternative whose domination graph attains a perfect matching. Formally, the domination graph of an alternative $x$ is a bipartite graph such that each side consists of the set of agents and a directed edge from agent $i$ to agent $j$ exists if and only if $i$ prefers $x$ over the favorite alternative $\favorite(j)$ of $j$. There are several centralized mechanisms that, given the ordinal preferences of the agents, output an alternative with this ordinal property, for example, {\sc Plurality-Matching}~\citep{gkatzelis2020resolving} and the much simpler {\sc Plurality-Veto}~\citep{kempe2022veto}. 
In the obnoxious setting that we consider, since alternatives are ordered in decreasing distance, the existence of a perfect matching $\mu$ in the domination graph of an alternative $x$ implies that  $d(i,x) \geq d(i,\favorite(\mu(i)))$ for every agent $i$. 

\begin{theorem}\label{thm:ordinal-centralized}
For $k=1$, the distortion of any ordinal mechanism that chooses an alternative whose domination graph attains a perfect matching is at most $3$, and this is tight over all ordinal mechanisms. 
\end{theorem}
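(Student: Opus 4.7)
The theorem splits into an upper bound of $3$ for any mechanism whose chosen alternative has a perfect matching in its domination graph, and a matching lower bound of $3$ over all deterministic ordinal mechanisms in the centralized setting. I would treat these in order.

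For the upper bound, let $x$ denote the alternative chosen by the mechanism, $o$ an optimal alternative, and $\mu$ the perfect matching in the domination graph of $x$. The defining property gives $d(i,x) \geq d(i,\favorite(\mu(i)))$ for every agent $i$. My plan is to charge the cost of $o$ at each agent $j$ to the matching preimage $i = \mu^{-1}(j)$. Concretely, chain
\begin{align*}
d(j, o)
\leq d(j, \favorite(j))
\leq d(j, i) + d(i, \favorite(j))
\leq d(j, i) + d(i, x)
\leq d(j, x) + 2\,d(i, x),
\end{align*}
where the first step uses that $\favorite(j)$ is the farthest alternative from $j$, the second and fourth are triangle inequalities, and the third is the matching property applied at $i$ (with $\mu(i) = j$). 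Summing over $j$ and using that $\mu^{-1}$ is a bijection of the agent set yields $\SC(o) \leq \SC(x) + 2\,\SC(x) = 3\,\SC(x)$.

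For the matching lower bound, I would give a direct two-agent, two-alternative line construction. Place $a$ at $0$ and $b$ at $1$, and consider the anti-symmetric ordinal profile in which one agent has $b \succ a$ and the other has $a \succ b$. Since the profile is invariant under swapping the alternative labels, the cases in which the mechanism returns $a$ or $b$ are symmetric; assume the former. I realize the profile with the first agent at position $0$ (so $b$ is the farther alternative) and the second at position $1/2 + \delta$ for a small $\delta > 0$ (so $a$ is the farther alternative). A direct computation gives $\SC(a) = 1/2 + \delta$ and $\SC(b) = 3/2 - \delta$, so the distortion of the mechanism is at least $(3/2 - \delta)/(1/2 + \delta)$, which tends to $3$ as $\delta \to 0$.

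The main subtlety lies in the upper bound: applying the matching inequality directly at agent $i$ in order to bound $d(i, o)$ only yields a factor of $5$, since each agent's bound then pays for both $d(i,x)$ and $d(\mu(i), x)$ with total coefficient $5$ across the sum. The key trick is to run the derivation at the \emph{other} endpoint of each matched pair---bounding $d(j, o)$ via the ``favorite is farthest'' inequality at $j$ and then invoking the matching property at $i = \mu^{-1}(j)$---so that, after summing, each distance-to-$x$ quantity is counted with total coefficient $3$ rather than $5$. This asymmetric way of pairing up the two properties of $\mu$ is what achieves the tight constant.
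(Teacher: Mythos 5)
Your proof is correct and follows essentially the same route as the paper: the upper bound chains the ``favorite is farthest'' inequality, the triangle inequality through $w$, and the matching property at the matched partner so that each distance to the winner is counted with total coefficient $3$, and the lower bound uses the same split ordinal profile on a two-alternative line (the paper's placement gives the ratio $3$ exactly rather than in the limit, but this is immaterial). No gaps.
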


\begin{proof}
Let $w$ be the alternative chosen as the winner by the mechanism, and $o$ an optimal alternative.
Since the domination graph of $w$ attains a perfect matching $\mu$, we have that $d(i,w) \geq d(i,\favorite(\mu(i)))$ for every agent $i$.
Also, $d(i,\favorite(i)) \geq d(i,o)$ by definition.
Using these properties and by applying the triangle inequality, we obtain
\begin{align*}
    \SC(o) 
    &\leq \sum_{i \in N} d(\mu(i),\favorite(\mu(i))) \\
    &\leq \sum_{i \in N} \bigg( d(\mu(i),w) + d(i,w) + d(i, \favorite(\mu(i))) \bigg) \\
    &\leq \sum_{i \in N} \bigg( d(\mu(i),w) + 2\cdot d(i,w) \bigg) \\
    &= 3 \cdot \SC(w).
\end{align*}

For the lower bound, consider an instance with two alternatives $a$ and $b$. Half of the agents prefer $a$ and the other half prefer $b$. Given these ordinal preferences, any of the two alternatives can be chosen as the winner, say $a$. Consider now the following consistent positions of alternatives and agents on the line of real numbers: 
$a$ is at $0$, $b$ is at $2$, the agents that prefer $a$ are at $1$, and the agents that prefer $b$ are at $0$. Hence, $\SC(a) = n/2$ and $\SC(b) = 3n/2$, leading to a distortion lower bound of $3$ on the distortion of any ordinal mechanism. 
\end{proof}

For instances with $k \geq 2$ groups, we consider the {\sc Max-Weight-of-Domination} mechanism that works as follows: The representative of each group is chosen to be any alternative whose domination graph attains a perfect matching (given the preferences of the agents in the group), and then the final winner is the alternative that represents the most agents. See Mechanism~\ref{mech:ordinal}. 

\SetCommentSty{mycommfont}
\begin{algorithm}[th]
\SetNoFillComment
\caption{\sc Max-Weight-of-Domination}
\label{mech:ordinal}
{\bf Input:} Ordinal preference $\succ_i$ of every agent $i \in N$\;
{\bf Output:} Winner $w$\;

\For{each $x \in A$}{
    $G_x \gets \varnothing$\;
}

\For{each group $g \in G$}{
    $r_g \gets $ {\sc Plurality-Veto}$(\{\succ_i\}_{i \in g})$\;
    $G_{r_g} \gets G_{r_g} \cup \{g\}$\;
}
$w \gets \arg\max_{x \in A} \sum_{g \in G_x} n_g$\;
\end{algorithm}

\begin{theorem} \label{thm:distributed-perfect-matching-4q-1}
The distortion of {\sc Max-Weight-of-Domination} is at most $4\min\{m,k\}-1$. 
\end{theorem}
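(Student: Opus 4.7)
The plan is to mirror the proof of Theorem~\ref{thm:unrestricted} for the full-information case, replacing the exact local optimality of $r_g$ with a per-group inequality derived from the perfect matching in the domination graph. With notation as before ($w$ the winner, $o$ an optimal alternative, $R$ the set of representatives with $|R| \leq \min\{m,k\}$, and $r \in \arg\max_{y \in R} d(w,y)$), both the max-weight inequality \eqref{eq:max-weight} and the identity $d(w, r_g) = 0$ for $g \in G_w$ carry over unchanged.

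The key new ingredient is the estimate
\begin{align*}
\sum_{i \in g} d(i, x) \leq 2 \sum_{i \in g} d(i, y) + \sum_{i \in g} d(i, r_g),
\end{align*}
valid for any alternatives $x, y$ and any group $g$. It follows by the same reasoning used to prove Theorem~\ref{thm:ordinal-centralized}: bound $\sum_{i \in g} d(i, x) \leq \sum_{i \in g} d(i, \favorite(i))$, re-index through the bijection $\mu_g$ of the perfect matching in the domination graph of $r_g$ restricted to $g$, apply the triangle inequality $d(\mu_g(i), \favorite(\mu_g(i))) \leq d(\mu_g(i), y) + d(i, y) + d(i, \favorite(\mu_g(i)))$ with pivot $y$, and use $d(i, \favorite(\mu_g(i))) \leq d(i, r_g)$ coming from the domination edge. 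The crucial observation is that the pivot $y$ is free; taking $y = w$ (rather than the usual $y = r_g$, which would give the standard bound $3\sum_{i \in g} d(i, r_g)$) preserves the coefficient of $\sum_{i \in g} d(i, r_g)$ as $1$.

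Applying the estimate with $x = o$ and $y = w$, summing over groups, and pushing $d(i, r_g) \leq d(i, w) + d(w, r_g)$ through together with $d(w, r_g) = 0$ for $g \in G_w$ and $d(w, r_g) \leq d(w, r)$ otherwise, one obtains
\begin{align*}
\SC(o) \leq 3 \SC(w) + d(w, r) \sum_{g \notin G_w} n_g.
\end{align*}
From this point the argument follows that of Theorem~\ref{thm:unrestricted} verbatim, with the single modification that the bound $\sum_{g \in G_w} \sum_{i \in g} d(i, r) \leq \SC(w)$ used in the full-information setting is replaced by $\sum_{g \in G_w} \sum_{i \in g} d(i, r) \leq 3 \SC(w)$, which is the centralized $3$-distortion of Theorem~\ref{thm:ordinal-centralized} applied inside each $g \in G_w$ (where $r_g = w$). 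Using \eqref{eq:max-weight} and $d(w, r) \leq d(i, w) + d(i, r)$ as in the full-information proof then yields $d(w, r) \sum_{g \notin G_w} n_g \leq 4(|R|-1) \SC(w)$, and therefore $\SC(o) \leq (4|R|-1) \SC(w) \leq (4 \min\{m,k\} - 1) \SC(w)$.

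The main obstacle is the choice of pivot in the per-group estimate. A naive composition---first invoking Theorem~\ref{thm:ordinal-centralized} to replace $\SC(o)$ with $3 \sum_g \sum_{i \in g} d(i, r_g)$ and then running the full-information argument unchanged---would multiply the $d(w, r) \sum_{g \notin G_w} n_g$ term by $3$, producing a bound of order $12|R|-9$ rather than $4|R|-1$. Pivoting through $w$ inside the matching argument folds the per-group loss into the additive $2 \SC(w)$ term, which is precisely what keeps the leading constant at $4$.
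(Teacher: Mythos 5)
Your proposal is correct and follows essentially the same route as the paper's proof: both pivot the matching-based triangle inequality through $w$ (rather than through $r_g$) to obtain $\SC(o) \leq 2\,\SC(w) + \sum_g \sum_{i \in g} d(i,r_g)$, then control the residual term via $d(w,r)$, the max-weight inequality, and the centralized factor-$3$ bound applied to the groups represented by $w$. The only cosmetic difference is that you package the first step as a per-group estimate with a free pivot, whereas the paper writes a single global chain of inequalities.
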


\begin{proof}
Let $w$ be the alternative chosen by the mechanism, and $o$ an optimal alternative. Let $R$ be the set of alternatives that represent some group, and denote by $G_x$ the set of group that $x\in R$ represents; note that $|R| \leq \min\{m,k\}$. 
Since the domination graph of the representative $r_g$ of group $g$ attains a perfect matching $\mu_g$ for the agents in $g$, we have 
\begin{align} \label{eq:domination:property}
    \forall g \in G, i \in g: d(i,\favorite(\mu_g(i)) \leq d(i,r_g).
\end{align}
By Theorem~\ref{thm:ordinal-centralized}, we also have that
\begin{align} \label{eq:domination:distortion}
    \forall g \in G, \forall x \in A:  \sum_{i \in g} d(i,x) \leq 3\cdot \sum_{i \in g} d(i,r_g).
\end{align}
Also, since $w$ maximizes the total size of the groups it represents, $\sum_{g \in G_w} n_g \geq \sum_{g \in G_x} n_g$ for every $x \in R$, and thus 
\begin{align} \label{eq:ordinal:max-weight}
    (|R|-1) \sum_{g \in G_w} n_g \geq \sum_{g \not\in G_w} n_g.
\end{align}
Since by definition $d(i,\favorite(i)) \geq d(i,o)$ for any agent $i$, using the triangle inequality, we can bound the optimal social welfare as follows:
\begin{align*}
\SC(o) = \sum_g \sum_{i \in g} d(i,o) 
&\leq \sum_{g \in G} \sum_{i \in g} d(\mu_g(i),\favorite(\mu_g(i)))\\ 
&\leq \sum_{g in G} \sum_{i \in g} \bigg( d(\mu_g(i),w) + d(i,w) + d(i,\favorite(\mu_g(i))) \bigg) \\
&= 2\cdot \SC(w) + \sum_{g \in G} \sum_{i \in g} d(i,\favorite(\mu_g(i))).
\end{align*}
We now bound the second term.
Let $r \in \arg\max_{y \in R} d(w,y)$ be a most-distant group representative from $w$. 
Using \eqref{eq:domination:property}, the triangle inequality, the definition of $r$, \eqref{eq:ordinal:max-weight}, the triangle inequality again, and finally \eqref{eq:domination:distortion} with $x=r$, we obtain:
\begin{align*}
\sum_{g\in G} \sum_{i \in g} d(i,\favorite(M_g(i))) 
&\leq \sum_{g \in G} \sum_{i \in g} d(i,r_g) \\
&\leq\sum_{g \in G} \sum_{i \in g} \bigg( d(i,w) + d(w,r_g) \bigg) \\
&= \SC(w) + \sum_g \sum_{i \in g}  d(w,r_g) \\
&\leq \SC(w) +  d(w,r) \cdot \sum_{g\not\in G_w} n_g \\
&\leq \SC(w) +  d(w,r) \cdot (|R|-1) \sum_{g\in G_w} n_g \\
&= \SC(w) + (|R|-1) \sum_{g\in G_w} \sum_{i \in N_g} d(w,r) \\
&\leq \SC(w) + (|R|-1) \sum_{g\in G_w} \sum_{i \in N_g} \bigg( d(i,w) + d(i,r) \bigg) \\
&\leq \SC(w) + 4\cdot (|R|-1) \cdot \SC(w) \\
&= (4|R|-3) \cdot \SC(w).
\end{align*}
Putting everything together, we have
\begin{align*}
\SC(o) \leq (4|R|-1) \cdot \SC(w),
\end{align*}
and the bound follows by the fact that $|R| \leq \min\{m,k\}$.
\end{proof}

As in the case of full-information mechanisms, when the metric is a line, if we are also given ordinal information about the alternatives, we can eliminate all alternatives besides the leftmost and rightmost ones, one of which is the optimal. Hence, we immediately obtain an upper bound $7$ by running {\sc Max-Weight-of-Domination} on the remaining two alternatives.

\begin{corollary}
When there are only two alternatives or when the metric space is a line, the distortion of {\sc Max-Weight-of-Domination} is at most $7$. 
\end{corollary}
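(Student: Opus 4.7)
The statement covers two cases, which I would handle separately. For only two alternatives, the bound is immediate from Theorem~\ref{thm:distributed-perfect-matching-4q-1}: with $m=2$, it already gives distortion at most $4\min\{m,k\}-1 \le 7$. The work lies in reducing the line case to this one, by pruning the alternative set to at most two ``candidate'' alternatives using only the ordinal data.

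The plan has two ingredients. First, I would invoke the geometric fact that on the line the social welfare $f(x) = \sum_{i \in N} |p_i - x|$ is convex and piecewise-linear in $x$, so its maximum over any finite set of alternative positions is attained at one of the two extremes---the leftmost alternative $a_L$ or the rightmost alternative $a_R$. Thus at least one of $a_L, a_R$ is globally optimal. Second, I would show that from ordinal preferences alone one can extract a two-element candidate set that still contains an optimum. The observation is that for any agent at position $p$, the farthest alternative on the line is whichever of $a_L, a_R$ lies on the opposite side of the midpoint; so every agent's top choice belongs to $\{a_L, a_R\}$. Defining $T = \bigcup_{i \in N}\{\favorite(i)\}$, we obtain $T \subseteq \{a_L, a_R\}$ and $|T| \le 2$. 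Moreover, $T$ contains an optimum: if $a_L \notin T$ then every agent ranks $a_R$ weakly above $a_L$, i.e.\ $d(i, a_R) \ge d(i, a_L)$ for all $i$, so $\SC(a_R) \ge \SC(a_L)$, and $a_R \in T$ is itself optimal (symmetrically for $a_R \notin T$).

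With these two pieces in hand, the mechanism preprocesses by discarding every alternative outside $T$ (equivalently, restricting each $\succ_i$ to $T$) and then runs {\sc Max-Weight-of-Domination} on the resulting instance. The restriction is a valid ordinal instance on a sub-metric space with at most two alternatives, so Theorem~\ref{thm:distributed-perfect-matching-4q-1} bounds its distortion by $4 \cdot 2 - 1 = 7$ against the best alternative in $T$; since $T$ contains a global optimum, the same bound holds against the best alternative in $A$. The main conceptual hurdle is really just this ordinal identification of the extremes together with the transfer of the distortion bound across the restriction; both follow cleanly from the convexity of $f$ and the fact that, with obnoxious preferences on a line, an agent's top-ranked alternative is completely determined by which side of the midpoint of $a_L, a_R$ the agent occupies.
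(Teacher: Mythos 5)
Your argument is essentially the paper's: on the line the welfare $\sum_i |p_i - x|$ is convex in the location $x$, so the optimum is attained at the leftmost or rightmost alternative; discard all others and apply Theorem~\ref{thm:distributed-perfect-matching-4q-1} with $m=2$ to get $4\cdot 2-1=7$ (the two-alternative case being immediate). The one deviation is that you identify the surviving pair via the global set $T=\bigcup_{i\in N}\{\favorite(i)\}$, and a preprocessing step based on all agents' tops makes a group's representative depend on agents outside the group, which technically violates the locality of two-step mechanisms; the paper instead assumes the order of the alternatives on the line is known so that each group can locally restrict to $\{a_L,a_R\}$, and your own observation that every agent's top already lies in $\{a_L,a_R\}$ yields the same local, group-by-group restriction without needing the global $T$.
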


\subsection{Lower Bounds}
Next, we show a tight lower bound on the distortion of ordinal mechanisms. We first need the following lemma to argue about the alternatives that can be selected as the representative of a group based on the top preferences of the agents in the group. 

\begin{lemma} \label{lem:lower:ordinal:top}
Any ordinal distributed mechanism with finite distortion must choose as the representative of a group one of the alternatives that is top-ranked by some agent in the group. 
\end{lemma}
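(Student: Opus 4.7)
I plan to argue by contrapositive: suppose the mechanism $M$ chooses as the representative of some group $g$ an alternative $x^*\notin\{\favorite(i):i\in g\}$; I will exhibit a sequence of instances on which $M$'s distortion grows without bound. The first step is a reduction to the centralized case: consider the single-group instance in which $g$ is the only group. By the locality property of two-step mechanisms, the representative, and hence the winner, is $x^*$, so it suffices to construct metrics realizing the ordinal profile of $g$ on which $\max_{y\in A}\SC(y)/\SC(x^*)$ is unbounded.

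For the metric construction I would fix some $y^\dagger\in\{\favorite(i):i\in g\}$, which is nonempty and disjoint from $\{x^*\}$ by assumption, and design a family of metrics parameterized by $\varepsilon\to 0$: every agent in $g$ is placed at distance $\varepsilon$ from $x^*$; every alternative $z$ that is ranked below $x^*$ by some agent is placed in the same $\varepsilon$-cluster around $x^*$; and $y^\dagger$ is placed at distance $\Theta(1)$ from the cluster. Each ordinal constraint $y\succ_i x^*$ is trivially satisfied as long as $d(i,y)\geq\varepsilon$, while each $x^*\succ_i z$ is enforced by clustering $z$ near $i$. As $\varepsilon\to 0$, $\SC(x^*)\to 0$ while some $\SC(y^\dagger)$ remains bounded below by a positive constant, so the distortion tends to infinity.

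The main obstacle will be the triangle inequality interacting with conflicting preferences: if $y^\dagger=\favorite(i_0)$ but there also exists $j\in g$ with $x^*\succ_j y^\dagger$, then the requirement $d(j,y^\dagger)\leq\varepsilon$ drags $y^\dagger$ into the cluster and defeats the single-group construction. To sidestep this I would pivot to the distributed setting, replicating $g$ many times (each copy still assigned representative $x^*$ by locality) and attaching an auxiliary group whose profile forces, via a unanimity-style argument, its representative to be an alternative $y^\star$ strictly preferred over $x^*$ by every agent of the auxiliary group. By tuning the sizes of the copies so that $M$'s winner-aggregation rule still declares $x^*$ the winner, the cumulative $\SC(y^\star)$ grows linearly with the number of copies while $\SC(x^*)$ stays controlled, again sending the distortion to infinity. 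The technical heart of the argument is carrying out this amplification on a single globally consistent metric that simultaneously realizes every group's ordinal profile.
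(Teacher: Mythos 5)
Your reduction to instances consisting of copies of $g$ via the locality property matches the paper's first step, but your metric construction goes in the wrong direction, and the obstacle you flag yourself is fatal rather than a technicality. In the bad case --- where some agent $j \in g$ has $x^* \succ_j y^\dagger$ --- consistency forces $d(j,y^\dagger) \leq d(j,x^*) \leq \varepsilon$, hence $d(x^*,y^\dagger) \leq 2\varepsilon$ by the triangle inequality, and then \emph{every} agent $i$ of \emph{every} group, including any auxiliary group you attach, satisfies $|d(i,y^\dagger) - d(i,x^*)| \leq 2\varepsilon$. Summing, $\SC(y^\dagger) \leq \SC(x^*) + 2\varepsilon n$ no matter how many copies of $g$ you add or how you tune the group sizes: the auxiliary group's contribution to the welfare of your witness $y^\star$ is cancelled by its nearly equal contribution to $\SC(x^*)$, so the ratio you are trying to blow up stays bounded. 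The amplification step is therefore not a missing technical detail to be ``carried out on a globally consistent metric''; it is a construction that provably cannot produce unbounded distortion whenever some agent of $g$ ranks $x^*$ above the alternative you are trying to push far away, and you have no guarantee that an alternative ranked above $x^*$ by \emph{all} agents of $g$ exists.

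The paper's proof avoids this entirely by building the metric the other way around: instead of clustering the agents near the offending representative $y$ and pushing a favourite far away, it co-locates $y$ (together with every alternative outside $\bigcup_{i \in g} \favorite(i)$) \emph{with all the agents} at the point $1$, and places every alternative in $\bigcup_{i \in g} \favorite(i)$ at $0$. Each agent's top alternative then sits at the maximum distance realized in the metric, so the top-preference constraints are satisfied without ever needing an alternative that is far from some agents yet close to others; one immediately gets $\SC(y) = 0$ while $\SC(\favorite(i)) = n$, i.e., infinite distortion from a single instance with no amplification. To repair your argument, replace the ``push a favourite away from the cluster'' idea with ``pull $y$ onto the agents'' and drop the distributed amplification altogether.
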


\begin{proof}
Suppose towards a contradiction that there is an ordinal mechanism with finite distortion which chooses alternative 
$y \not\in \bigcup_{i \in g} \favorite(i)$ as the representative of a group $g$. 
So, in any instance consisting of only copies of this group, the only representative alternative, and thus the winner, is $y$. 
Now consider the following line metric that is consistent with the top preferences of the agents:
\begin{itemize}
    \item Every alternative $x \in \bigcup_i \favorite(i)$ is at $0$;
    \item All the remaining alternatives as well as all agents are at $1$. 
\end{itemize}
Hence, $\SC(y) = 0$ while $\SC(x) = n$ for any $x \in \bigcup_{i \in g} \favorite(i)$, leading to infinite distortion.
\end{proof}

Using this lemma, we can show a lower bound of $4\min\{m,k\}-1$, which holds even when the distances between alternatives are known. This further shows that {\sc Max-Weight-of-Domination} is the best possible ordinal mechanism for general instances. 

\begin{theorem}\label{thm:lower:ordinal:general}
For any $m \geq 3$, the distortion of any ordinal distributed mechanism is $4\min\{m,k\}-1$, even when the locations of the alternatives are known. 
\end{theorem}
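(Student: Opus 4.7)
The plan is to combine the construction used to prove the full-information lower bound in Theorem~\ref{thm:lower:unrestricted:general:tie} with the $3$-distortion gadget used in the centralized ordinal bound of Theorem~\ref{thm:ordinal-centralized}, in order to match the upper bound $4\min\{m,k\}-1$ of Theorem~\ref{thm:distributed-perfect-matching-4q-1}. Write $q = \min\{m,k\}$. First I would place $q$ active alternatives $x_1, \ldots, x_q$ at pairwise distance $1$ from each other (any extra alternatives would be placed at the simplex's center as inert decoys that are never representatives). For each $j \in [q-1]$, I would form a group $g_j$ of $\lambda$ agents sitting very close to $x_q$ but with ordinal rankings in which $x_j$ is the unique top; Lemma~\ref{lem:lower:ordinal:top} then forces the representative of $g_j$ to be $x_j$.

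For the remaining group $g_q$ I would embed the centralized $3$-distortion gadget: $\lambda/2$ agents located at the simplex center (at distance $1/2$ from every alternative) with tie-breaking that places $x_q$ at the top of their ranking, and $\lambda/2$ agents co-located with $x_q$ (distance $0$ from $x_q$ and distance $1$ from every other $x_l$) with tie-breaking that places some $x_l$ with $l \neq q$ at the top. A direct calculation then gives $\SW_q(x_q) = \lambda/4$ and $\SW_q(x_{l'}) = 3\lambda/4$ for every $l' \neq q$. Combining these contributions with the $q-1$ near-$x_q$ groups one obtains $\SW(x_q) \approx \lambda/4$ while $\SW(x_{l'}) \approx (q-1)\lambda + 3\lambda/4 = (4q-1)\lambda/4$ for $l' \neq q$, so that if the mechanism's final winner is $x_q$ the distortion is exactly $4q-1$. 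I would handle the cases $m > q$ and $k > q$ by the same tricks as in Remark~\ref{remark}: adding inert alternatives or replicating the near-$x_q$ groups so that multiple small groups share a representative.

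The step I expect to be the hardest is the analogue of the ``WLOG the winner is the representative of the last group'' move used in Theorem~\ref{thm:lower:unrestricted:general:tie}. In the full-information case the symmetry of the equidistant alternatives, together with tie-breaking freedom inside each group, justifies the step; in the ordinal case, however, the representative of $g_q$ is only constrained by Lemma~\ref{lem:lower:ordinal:top} to lie in $\{x_q, x_l\}$, and a mechanism that returns $x_l$ creates a duplicate representative and a plurality winner outside the hub, collapsing the distortion to nearly $1$ under the same metric. I plan to close this gap by exploiting both the permutation symmetry of the simplex on $\{x_1,\ldots,x_q\}$ and the adversary's freedom in picking the index $l$ used for the second-half tie-break: for any fixed mechanism, an adversarial relabeling of the active alternatives together with an appropriate choice of $l$ should route the mechanism's winner to the ``hub'' alternative playing the role of $x_q$, delivering the claimed distortion of $4q-1$.
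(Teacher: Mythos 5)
Your gadget and your numbers are exactly right---the hub/center split inside the last group giving $\SW(\text{hub}) = \lambda/4$ versus $3\lambda/4$ for every other alternative, combined with $q-1$ groups parked on the hub, is precisely what yields $4q-1$---but the step you yourself flag as hardest is a genuine gap, and the sketched fix is not an argument. The trouble is that your gadget group's two admissible representatives are $x_q$ and $x_l$, and $x_l$ is already the forced representative of $g_l$. If the mechanism returns $x_l$, you cannot simply relabel so that $x_l$ becomes the hub: group $g_l$ top-ranks $x_l$ and therefore cannot be co-located with $x_l$ (an agent sitting on an alternative cannot rank it farthest), so it would have to move to the center, which inflates $\SW(\text{hub})$ to $3\lambda/4$ and degrades the ratio to $(4q-3)/3$. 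Rescuing the relabeling requires choosing the other $q-1$ groups' profiles \emph{after} querying the mechanism for the gadget group's representative (legitimate by locality, but a different construction from the one you wrote down), and it still leaves open the separate question of why the final winner among $q$ distinct, equally weighted representatives is the hub rather than some $x_j$.

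The paper sidesteps all of this with one structural idea your proposal is missing: it uses $m = k+1$ equidistant alternatives and gives the gadget group two \emph{dedicated} alternatives $a,b$ that no other agent top-ranks. By Lemma~\ref{lem:lower:ordinal:top} the gadget group's representative is then one of $\{a,b\}$, hence automatically distinct from the representatives $x_1,\ldots,x_{k-1}$ of the other groups; whichever of $a,b$ is chosen can serve as the hub, with every other group placed on it, and the duplicate-representative scenario never arises. The only remaining step is the same symmetry/WLOG assumption about the final winner already used in Theorem~\ref{thm:lower:unrestricted:general:tie}. This is also precisely why the theorem is stated for $m \geq 3$: the construction needs one more alternative than groups. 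Without this idea (or a fully worked-out adaptive relabeling argument in its place), your proof is incomplete at its central step.
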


\begin{proof}
We consider an instance with $m=k+1$ alternatives $a,b,x_1,\ldots,x_{k-1}$ that are located in the metric space so that they are all at distance $1$ from each other.\footnote{As discussed in Remark~\ref{remark} for the proof of Theorem~\ref{thm:lower:unrestricted:general:tie}, we could also have more alternatives in this proof as well without any change in the arguments. We could also design an instance with some $m < k$ by having multiple groups represented by the same alternative.} 
Given that all alternatives are symmetric, for any instance such that the groups are of the same size $\lambda \geq 1$ and are represented by different alternatives, we can assume without loss of generality that the winner is the representative of the last group. 

Consider now the following instance in which, for $j \in [k-1]$, group $g_j$ consists of only agents whose favorite alternative is $x_j$, while in group $g_k$ the top preference of half the agents is $a$ and the top preference of the remaining half is $b$. By Lemma~\ref{lem:lower:ordinal:top}, $x_j$ must be the representative of group $g_j, j \in [k-1]$, while either $a$ or $b$ must be the representative of group $g_k$, say $a$. So, by assumption, $a$ is also the overall winner. 
Consider the following consistent distances of the agents from the alternatives:
\begin{itemize}
    \item All agents in the first $k-1$ groups are on $a$ and report rankings consistent to the above structure (that is, the agents of group $g_j$ rank $x_j$ first even though they are at distance $1$ from all other alternatives);
    \item In group $g_k$, the $\lambda/2$ agents that prefer $a$ are at distance $1/2$ from all alternatives while the $\lambda/2$ agents that prefer $b$ are on $a$. 
\end{itemize}
Hence, 
\begin{align*}
    \SC(a) = \frac{\lambda}{4}.
\end{align*}
whereas, for any $j \in [k-1]$,
\begin{align*}
    \SC(x_j) = (k-1)\lambda\cdot 1 + \frac{\lambda}{2}\cdot \frac12 + \frac{\lambda}{2}\cdot 1 = (k-1)\lambda + \frac{3\lambda}{4}.
\end{align*}
The distortion is thus at least $4k-1$.
\end{proof}

The construction in the proof of Theorem~\ref{thm:lower:ordinal:general} requires at least $3$ alternatives so that it is possible to have $m=k+1$ when $k \geq 2$. Also, since these alternatives are equidistant, the underlying metric space cannot be a line. Nevertheless, we can show a tight lower bound of $7$ even when the metric space is a line and there are just two alternatives with known locations. 

\begin{theorem}\label{thm:ordinal:lower:3:wlog}
The distortion of any ordinal mechanism is at least $7$, even when the metric space is a line and there are only two alternatives.     
\end{theorem}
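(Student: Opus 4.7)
The plan is to mirror the inductive methodology of Theorem~\ref{thm:unrestricted:lower:3:wlog} in the ordinal setting. With the two alternatives fixed at $0$ and $1$ on the real line, each agent's ordinal ranking reduces to a binary label: agents preferring $0$ lie in $(1/2,1]$, agents preferring $1$ lie in $[0,1/2)$. By anonymity each group is then characterized by the fraction $p$ of its members preferring $0$; I write $g_p$ for such a group and $f(p)\in\{0,1\}$ for the representative the mechanism assigns to it. I assume throughout that the mechanism has distortion strictly less than $7$ and aim for a contradiction, adopting the WLOG (symmetric to that of Theorem~\ref{thm:unrestricted:lower:3:wlog}) that whenever there are exactly two groups with different representatives the declared winner is $0$.

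I would first prove the ordinal analog of Lemma~\ref{lem:unrestricted:small-or-large-x}: $f(p)=0$ for every $p\geq 3/4$ and $f(p)=1$ for every $p\leq 1/4$. This is a single-group adversarial argument: if $p\geq 3/4$ and $f(p)=1$, then positioning preference-$0$ agents at $1$ and preference-$1$ agents just below $1/2$ forces $\SC(0)/\SC(1)\to(1+p)/(1-p)\geq 7$, contradicting the distortion assumption; the symmetric placement handles $p\leq 1/4$.

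Next comes the ordinal analog of Lemma~\ref{lem:unrestricted:majority}: if the mechanism chooses $1$ as the winner in every instance with $\ell$ equal-size groups represented by $0$ and $\ell+1$ equal-size groups represented by $1$, then $f(p_\ell)=0$ for $p_\ell=(2\ell+3)/(4\ell+4)$. Supposing $f(p_\ell)=1$, I use $\ell$ copies of $g_1$ (represented by $0$ by the previous step) together with $\ell+1$ copies of $g_{p_\ell}$; placing preference-$0$ agents at $1$ and preference-$1$ agents just below $1/2$, the specific choice of $p_\ell$ makes both numerator and denominator of the distortion ratio collapse to $7(2\ell+1)/4$ and $(2\ell+1)/4$ respectively, yielding $\SC(0)/\SC(1)=7$.

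Then I would prove by induction on $\ell\geq 1$ (analog of Lemma~\ref{lem:unrestricted:g1/2+}) that the hypothesis of the previous step in fact holds. The base case uses one copy of $g_{3/4}$ (represented by $0$) and two copies of $g_0$, placing preference-$0$ agents just past $1/2$ and preference-$1$ agents at $0$ to force $\SC(1)/\SC(0)=21/3=7$; the inductive step invokes the preceding lemma at $\ell-1$ to obtain $f(p_{\ell-1})=0$ with $p_{\ell-1}=(2\ell+1)/(4\ell)$, and the parallel instance of $\ell$ copies of $g_{p_{\ell-1}}$ and $\ell+1$ copies of $g_0$ again collapses to the sharp ratio $7$. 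Combining everything, $f(p_\ell)=0$ for every $\ell\geq 1$, so letting $\ell\to\infty$ gives $f(1/2+\varepsilon)=0$ for arbitrarily small $\varepsilon>0$. The two-group instance $\{g_{1/2+\varepsilon},\,g_0\}$ then has representatives $0$ and $1$ and winner $0$ by the WLOG, and a consistent metric yields $\SC(0)\approx 1/4$ and $\SC(1)\approx 7/4$, so the distortion approaches $7$. The main obstacle is the algebraic tightness: the particular value $p_\ell=(2\ell+3)/(4\ell+4)$ is precisely what makes the adversarial metric saturate the threshold $7$ at every inductive step, and the whole proof hinges on verifying this cancellation carefully in both the base and inductive constructions.
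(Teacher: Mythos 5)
Your proposal follows essentially the same inductive route as the paper's proof: the same three lemmas (forced representatives for $p\le 1/4$ and $p\ge 3/4$, the half-group step at $p_\ell=(2\ell+3)/(4\ell+4)$, and the majority induction), the same adversarial placements of the preference-$0$ and preference-$1$ fractions (at $1$ and $1/2$, or at $1/2$ and $0$, depending on which alternative the construction is boosting), and the same limiting two-group instance $\{g_{1/2+\varepsilon}, g_0\}$ at the end. The only blemish is a factor-of-two slip in the welfare values you state for the half-group step (they should be $(2\ell+1)/8$ and $(14\ell+7)/8$) and the stray ``$21/3$'' in the base case (the paper gets $21/8$ versus $3/8$), neither of which affects the ratio of $7$.
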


To prove Theorem~\ref{thm:ordinal:lower:3:wlog}, we follow the same methodology as we did for showing the corresponding lower bound of $3$ for full-information mechanisms. Essentially, the same structural lemmas from the previous section hold here as well but the ordinal information allows for the agents to be located at different points on the line, which in turn allows us to show a larger lower bound of $7$. 

In all instances in the following construction, there will be two alternatives located at $0$ and $1$ on the line of real numbers. Without loss of generality, we again assume that, if there are just two groups and each of them has a different representative (that is, one is represented by alternative $0$ while the other is represented by alternative $1$), then the winner is $0$. 
Since there are only two alternatives, any group can be characterized by the number of agents that prefer alternative $0$; for any $x \in [0,1]$, let $g_x$ be the group in which an $x$-fraction of the agents prefers alternative $0$. All groups will have the same size of $1$ (which is partitioned into a fraction that prefers alternative $0$ and a fraction that prefers alternative $1$), which allows us to simplify our calculations. 

\begin{lemma}\label{lem:ordinal:small-or-large-x}
Any ordinal mechanism with distortion strictly smaller than $7$ 
must choose the representative of group $g_x$ to be 
alternative $0$ for $x \in [3/4,1]$
and alternative $1$ for  $x \in [0,1/4]$.
\end{lemma}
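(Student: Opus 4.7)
The plan is to mirror the proof of Lemma~\ref{lem:unrestricted:small-or-large-x}, taking advantage of the extra freedom the adversary now has. In the full-information case all agents of $g_x$ were pinned to a single point $x$; here, $g_x$ only specifies that an $x$-fraction of the agents must prefer alternative $0$ (so they must lie at points with $d(\cdot,0)\ge d(\cdot,1)$) and the remaining $(1-x)$-fraction must prefer alternative $1$. This extra flexibility is what pushes the threshold distortion from $3$ up to $7$, while keeping the symmetric threshold $x=3/4$ untouched.

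By symmetry I would focus on $x\in[3/4,1]$, with $x\in[0,1/4]$ following by swapping the two alternatives. Suppose, for contradiction, that the mechanism picks $1$ as the representative of $g_x$ for some $x\in[3/4,1]$. Then in any instance consisting only of copies of $g_x$, the unique representative, and hence the overall winner, is $1$.

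To manufacture the bad instance, I would place the $x$-fraction of agents preferring $0$ at position $1$ on the line (so each is as far as possible from $0$ while sitting on top of $1$), and place the $(1-x)$-fraction preferring $1$ at position $1/2$ (so each is as close to $1$ as its ordinal constraint permits). A short computation then yields $\SC(0)=(1+x)/2$ and $\SC(1)=(1-x)/2$, so the distortion is at least $(1+x)/(1-x)$, which is $\ge 7$ for $x\ge 3/4$, contradicting the assumption of distortion strictly smaller than $7$.

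The only real subtlety is the tie at position $1/2$: the agents placed there need to rank $1$ strictly above $0$ for the instance to be consistent with the definition of $g_x$. I would handle this either by observing that the ordinal consistency condition $d(i,x)\ge d(i,y)$ is only a weak inequality and so equidistant agents may be ranked either way, or by perturbing the positions to $1/2-\epsilon$ and taking $\epsilon\to 0$; since distortion is a supremum, the limiting ratio of $7$ at $x=3/4$ still contradicts a strictly sub-$7$ bound. Beyond this boundary bookkeeping I do not foresee any real obstacle, since the construction is essentially the ordinal counterpart of the co-located instance used in Lemma~\ref{lem:unrestricted:small-or-large-x}.
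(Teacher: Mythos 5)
Your proposal is correct and follows essentially the same construction as the paper: the $x$-fraction preferring $0$ is placed at $1$, the $(1-x)$-fraction preferring $1$ is placed at $1/2$, giving $\SC(0)=(1+x)/2$ and $\SC(1)=(1-x)/2$ and hence distortion at least $(1+x)/(1-x)\geq 7$ for $x\geq 3/4$. Your remark about the tie at $1/2$ is handled exactly as you suggest, since the paper's ordinal consistency condition is the weak inequality $x \succ_i y \Rightarrow d(i,x)\geq d(i,y)$.
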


\begin{proof}
Let $x\in [3/4,1]$; the case $x \in [0,1/4]$ is symmetric. Suppose otherwise that the representative of $g_x$ is $1$ instead of $0$. Then, in any instance consisting of copies of $g_x$ (which means that all agents are located at $x$), 
$1$ is the only representative, and thus the overall winner. Now consider the case where the agents of $g_x$ are located such that the $x$-fraction of agents that prefer $0$ are at $1$ and the remaining $(1-x)$-fraction of agents that prefer $1$ are at $1/2$. Hence, $\SC(0)=x + (1-x)/2 = (1+x)/2$ and $\SC(1) = (1-x)/2$, leading to a the distortion of at least $(1+x)/(1-x)$, which is a non-decreasing function of $x \geq 3/4$, and is thus at least $7$. 
\end{proof}

\begin{lemma}\label{lem:ordinal:half-group}
Suppose an ordinal mechanism with distortion strictly smaller than $7$ which chooses alternative $1$ as the winner when there are $2\ell+1$ groups such that $\ell$ of them are represented by $0$ while the remaining $\ell+1$ are represented by $1$, for any $\ell \geq 1$. Then, the mechanism must choose alternative $0$ as the representative of group $g_{(2\ell+3)/(4\ell+4)}$.
\end{lemma}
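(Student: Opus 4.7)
The plan is to mirror the structure of Lemma~\ref{lem:unrestricted:majority} from the full-information section, exploiting the extra flexibility provided by the ordinal setting. Namely, in the full-information case the agents in a group $g_x$ were forced to be co-located at $x$, while in the ordinal case the preferences only fix, for each agent, on which side of $1/2$ it lies. This slack in placing the agents on the line is exactly what will push the threshold from $3$ up to $7$.

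First, I would suppose towards contradiction that the mechanism picks alternative $1$ as the representative of $g_{(2\ell+3)/(4\ell+4)}$, and combine this with Lemma~\ref{lem:ordinal:small-or-large-x} applied to $g_1$ (which forces $0$ as its representative). Specifically, I would consider the instance consisting of $\ell$ copies of $g_1$ and $\ell+1$ copies of $g_{(2\ell+3)/(4\ell+4)}$. Then we have exactly $\ell$ groups represented by $0$ and $\ell+1$ groups represented by $1$, so by the hypothesis of the statement the overall winner is $1$.

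Second, I would choose a line embedding consistent with the declared ordinal preferences that maximizes $\SC(0)/\SC(1)$. The natural choice is to place every agent who ranks $0$ first at position $1$ (distance $1$ from $0$, distance $0$ from $1$, so preferences are consistent), and every agent who ranks $1$ first at position $1/2$ (equidistant from both alternatives, so preferences are consistent with either ranking). Writing $x=(2\ell+3)/(4\ell+4)$ and recalling that every group has total mass $1$, a direct computation gives
\begin{align*}
\SC(1) &= (\ell+1)\cdot (1-x) \cdot \tfrac{1}{2} = \tfrac{2\ell+1}{8},\\
\SC(0) &= \ell + (\ell+1)\cdot x \cdot 1 + (\ell+1)\cdot (1-x)\cdot \tfrac{1}{2} = \tfrac{14\ell+7}{8},
\end{align*}
so that $\SC(0)/\SC(1) = 7$, contradicting the assumed distortion strictly smaller than $7$.

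The one thing to double-check is that the placement is indeed consistent with the preferences of every single agent (in particular, that agents at $1/2$ may legitimately rank either alternative first, since the tie-breaking in the definition of $\succ_i$ is only required to satisfy $x \succ_i y \Rightarrow d(i,x) \geq d(i,y)$); beyond this consistency check, everything reduces to routine arithmetic, so I do not expect any real obstacle. The calibration of the constant $(2\ell+3)/(4\ell+4)$ is then seen to be exactly the value for which the embedding at $\{1/2, 1\}$ pushes the ratio to the desired threshold $7$, which is the ordinal analogue of how $(2\ell+1)/(4\ell)$ pushed it to $3$ in the full-information case.
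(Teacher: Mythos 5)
Your proposal is correct and follows exactly the same route as the paper's proof: the same instance ($\ell$ copies of $g_1$ plus $\ell+1$ copies of $g_{(2\ell+3)/(4\ell+4)}$), the same line embedding (agents preferring $0$ at position $1$, agents preferring $1$ at position $1/2$), and the same arithmetic yielding $\SC(0)/\SC(1)=7$. The consistency check you flag is indeed the only subtlety, and it holds as you describe.
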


\begin{proof}
Suppose otherwise that $1$ is chosen as the representative of group $g_{(2\ell+3)/(4\ell+4)}$. 
Then, consider an instance consisting of $2\ell+1$ groups such that there are $\ell$ copies of $g_1$ and $\ell+1$ copies of $g_{(2\ell+3)/(4\ell+4)}$. By Lemma~\ref{lem:ordinal:small-or-large-x}, the representative of $g_1$ must be $0$, while the representative of $g_{(2\ell+3)/(4\ell+4)}$ is $1$ by assumption. Hence, we have an instance in which there are $\ell$ groups represented by $0$ and $\ell+1$ represented by $1$, leading to $1$ being chosen as the winner by the mechanism due to the assumption of the statement.

Now consider the following consistent positions for the agents on the line: 
All the agents of $g_1$ are located at $1$, 
the $(2\ell+3)/(4\ell+4)$-fraction of $g_{(2\ell+3)/(4\ell+4)}$ that prefer $0$ are located at $1$, 
and the remaining $(2\ell+1)/(4\ell+4)$-fraction of $g_{(2\ell+3)/(4\ell+4)}$ that prefer $1$ are located at $1/2$. 
Hence, 
\begin{align*}
    \SC(1) = (\ell+1) \cdot \frac{2\ell+1}{4\ell+4} \cdot \frac12  = \frac{2\ell+1}{8}
\end{align*}
whereas
\begin{align*}
    \SC(0) = \ell + (\ell+1) \cdot \frac{2\ell+3}{4\ell+4}  + (\ell+1) \cdot \frac{2\ell+1}{4\ell+4} \cdot \frac12 = \frac{14\ell+7}{8}, 
\end{align*}
leading to a distortion of $7$, a contradiction. 
\end{proof}

Using Lemmas~\ref{lem:ordinal:small-or-large-x} and~\ref{lem:ordinal:half-group}, we can now argue that the assumption of Lemma~\ref{lem:ordinal:half-group} is true for any low-distortion mechanism. 

\begin{lemma}\label{lem:ordinal:majority}
Any ordinal mechanism with distortion strictly smaller than $7$ must choose alternative $1$ as the winner when there are $2\ell+1$ groups such that $\ell$ of them are represented by $0$ while the remaining $\ell+1$ are represented by $1$, for any $\ell \geq 1$.
\end{lemma}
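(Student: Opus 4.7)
My plan is to prove Lemma~\ref{lem:ordinal:majority} by induction on $\ell$, exactly paralleling the structure of Lemma~\ref{lem:unrestricted:g1/2+} in the full-information section. In both cases, we combine Lemma~\ref{lem:ordinal:small-or-large-x} (which fixes the representative of ``extreme'' groups) with Lemma~\ref{lem:ordinal:half-group} (which propagates the winner-is-$1$ property to the representative of an intermediate group). The induction will walk through these two lemmas alternately: Lemma~\ref{lem:ordinal:majority} at level $\ell-1$ feeds into Lemma~\ref{lem:ordinal:half-group} to pin down the representative of some $g_{x_\ell}$, which in turn is the key ingredient in the inductive step at level $\ell$. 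The essential new feature compared to the full-information setting is that ordinal information allows agents within a group to be placed anywhere on the line compatible with their ranking, and we will exploit this freedom to drive the distortion up from $3$ to $7$.

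For the base case $\ell=1$, I will suppose towards a contradiction that the mechanism declares $0$ as the winner on some instance of three groups with one represented by $0$ and two by $1$. The candidate instance consists of $g_{3/4}$ together with two copies of $g_0$; by Lemma~\ref{lem:ordinal:small-or-large-x} the former is represented by $0$ and the latter by $1$, so the winner is indeed $0$. I then place the agents to maximise distortion: in each copy of $g_0$, all agents sit at $0$; in $g_{3/4}$, the $3/4$-fraction preferring $0$ are placed at $1/2$ (making them indifferent, which is consistent with either ranking) while the $1/4$-fraction preferring $1$ are placed at $0$. A short calculation gives $\SC(0)=3/8$ and $\SC(1)=21/8$, a ratio of exactly $7$, contradicting the assumed distortion bound.

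For the inductive step, assuming the statement at level $\ell-1$, Lemma~\ref{lem:ordinal:half-group} forces $g_{(2\ell+1)/(4\ell)}$ to be represented by $0$. I then suppose towards a contradiction that on some instance with $\ell$ groups represented by $0$ and $\ell+1$ by $1$ the winner is $0$, and build such an instance from $\ell$ copies of $g_{(2\ell+1)/(4\ell)}$ (each represented by $0$ by the induction) and $\ell+1$ copies of $g_0$ (each represented by $1$ by Lemma~\ref{lem:ordinal:small-or-large-x}). I again exploit ordinal slack: in each $g_{(2\ell+1)/(4\ell)}$, the $(2\ell+1)/(4\ell)$-fraction preferring $0$ is placed at $1/2$ and the remaining $(2\ell-1)/(4\ell)$-fraction preferring $1$ is placed at $0$; all agents of the $g_0$ copies are placed at $0$. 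Summing contributions should yield $\SC(0)=(2\ell+1)/8$ and $\SC(1)=(14\ell+7)/8$, a ratio of $7$ on the nose, which is the required contradiction.

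There is no real obstacle here beyond being careful with the bookkeeping: the proof is essentially a templated induction mirroring Lemma~\ref{lem:unrestricted:g1/2+}, with the position choices (``indifferent fraction at $1/2$, minority fraction at $0$, $g_0$-agents at $0$'') doing all the work of converting the $3$-lower-bound pattern into a $7$-lower-bound pattern. The only mild subtlety is confirming that the indifferent placement at $1/2$ is consistent with the prescribed preference (which holds because ties are allowed in the definition $x \succ_i y \Rightarrow d(i,x) \geq d(i,y)$); if one insists on strict rankings, the same argument goes through with a perturbation by an infinitesimal $\varepsilon$.
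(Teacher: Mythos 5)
Your proof is correct and follows essentially the same route as the paper's: the same induction, the same instances ($g_{3/4}$ plus two copies of $g_0$ for the base case; $\ell$ copies of $g_{(2\ell+1)/(4\ell)}$ plus $\ell+1$ copies of $g_0$ for the step), and the same agent placements, yielding ratios of exactly $7$ in both cases. Your value $\SC(1)=(14\ell+7)/8$ in the inductive step is in fact the correct one (the paper's displayed $(14\ell+7)/4$ is a typo that does not affect the ratio), and your remark about the consistency of placing the $0$-preferring fraction at the indifference point $1/2$ is a valid, if minor, point of extra care.
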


\begin{proof}
We will prove the statement by induction using Lemmas~\ref{lem:ordinal:small-or-large-x} and~\ref{lem:ordinal:half-group} repeatedly. 

\medskip
\noindent
{\bf Base case: $\ell=1$.}
Suppose otherwise that the mechanism chooses $0$ as the winner when there are $3$ groups such that one of them is represented by $0$ while the other two are represented by $1$. Consider an instance consisting of $g_{3/4}$ and two copies of $g_0$. By Lemma~\ref{lem:ordinal:small-or-large-x}, the representative of $g_{3/4}$ must be $0$ (using $x=3/4$), while the representative of $g_0$ must be $1$ (using $x=0$). Hence, the overall winner is $0$. Consider the following consistent metric:
The $3/4$-fraction in $g_{3/4}$ that prefers $0$ is at $1/2$, while the remaining $1/4$-fraction in $g_{1/4}$ that prefers $1$ as well as all the agents in the copies of $g_0$ are at $0$.
So, $\SC(0) = \frac34 \cdot \frac12 = 3/8$ and $\SC(1) = \frac34 \cdot \frac12 + \frac14 + 2 = 21/8$, leading to a distortion of $7$, a contradiction. 

\medskip
\noindent 
{\bf Hypothesis:} We assume that the statement is true for $\ell-1$, that is, the winner is $1$ when there are $2(\ell-1)+1 = 2\ell-1$ groups such that $\ell-1$ of them are represented by $0$ while the remaining $\ell$ are represented by $1$. Hence, by Lemma~\ref{lem:ordinal:half-group}, the mechanism must choose $0$ as the representative of group $g_{(2(\ell-1)+3)/(4(\ell-1)+4)} = g_{(2\ell+1)/(4\ell)}$.

\medskip
\noindent 
{\bf Induction step:}
Suppose otherwise that the mechanism chooses $0$ as the winner when there are $2\ell+1$ groups such that $\ell$ of them are represented by $0$ while the remaining $\ell+1$ are represented by $1$. Consider an instance consisting of 
$\ell$ copies of $g_{(2\ell+1)/(4\ell)}$ which are represented by $0$ due to our induction hypothesis, 
and $\ell+1$ copies of $g_0$ which are represented by $1$ by Lemma~\ref{lem:ordinal:small-or-large-x}.
The metric space is the following:
The $(2\ell+1)/(4\ell)$-fraction of each copy of $g_{(2\ell+1)/(4\ell)}$ that prefers $0$ is at $1/2$, while the 
remaining $(2\ell-1)/(4\ell)$-fraction of each copy of $g_{(2\ell+1)/(4\ell)}$ that prefers $1$ as well as all the agents in the copies of $g_0$ are at $0$. 
So,
\begin{align*}
    \SC(0) = \ell \cdot \frac{2\ell+1}{4\ell} \cdot \frac12 = \frac{2\ell+1}{8}
\end{align*}
and
\begin{align*}
    \SC(1) = \ell \cdot \frac{2\ell+1}{4\ell}\cdot \frac12 + \ell \cdot \frac{2\ell-1}{4\ell} + \ell+1 = \frac{14\ell+7}{4},
\end{align*}
which means that the distortion is at least $7$, a contradiction. 
\end{proof}

We are now ready to prove Theorem~\ref{thm:ordinal:lower:3:wlog}.

\begin{proof}[Proof of Theorem~\ref{thm:ordinal:lower:3:wlog}]
By Lemma~\ref{lem:ordinal:majority} and Lemma~\ref{lem:ordinal:half-group}, we have that any ordinal mechanism with distortion strictly smaller than $7$ must choose $0$ as the representative of group $g_{(2\ell+3)/(4\ell+4)}$ for any $\ell \geq 1$. Taking $\ell$ to infinity, we have that the mechanism chooses $0$ as the representative of group $g_{1/2+\varepsilon}$ for some infinitesimal $\varepsilon > 0$. 

Now consider an instance that consists of $g_{1/2+\varepsilon}$ and $g_0$. Since the former is represented by $0$ and the latter is represented by $1$ (due to Lemma~\ref{lem:ordinal:small-or-large-x}), the winner is $0$. However, the metric space might be such that 
the $(1/2+\varepsilon)$-fraction of $g_{1/2+\varepsilon}$ that prefers $0$ is at $1/2$, while the remaining $(1/2-\varepsilon)$-fraction of $g_{1/2+\varepsilon}$ that prefers $1$ as well as all the agents in $g_0$ is at $0$. Hence, $\SC(0) \approx 1/4$ and $\SC(1) \approx 1/4 + 1/2 + 1 = 7/4$, the distortion is at least $7$. 
\end{proof}

\section{Conclusion}
We studied an obnoxious distributed metric voting model and showed tight bounds on the distortion of full-information and ordinal mechanisms that operate in two steps by first choosing representatives for the groups of agents and then choose one of the representatives as the overall winner. There are many interesting directions for future research. In terms of our model, one could investigate whether improved bounds are possible for particular value combinations for $m$ (the number of alternatives) and $k$ (the number of groups), or focus on mechanisms that have access to different types of information (e.g., {\em threshold approvals}~\citep{anshelevich2024approvals}), as well as {\em randomized} mechanisms. 
Going beyond our work, it would be interesting to study the distortion of obnoxious versions of other social choice problems, such as multi-winner voting and participatory budgeting.

\subsection*{Acknowledgments}
The author would like to thank Elliot Anshelevich for fruitful discussions at early stages of this work, and the anonymous reviewers for their valuable feedback.

\bibliographystyle{plainnat}
\bibliography{references}

\begin{thebibliography}{29}
\providecommand{\natexlab}[1]{#1}
\providecommand{\url}[1]{\texttt{#1}}
\expandafter\ifx\csname urlstyle\endcsname\relax
  \providecommand{\doi}[1]{doi: #1}\else
  \providecommand{\doi}{doi: \begingroup \urlstyle{rm}\Url}\fi

\bibitem[Amanatidis et~al.(2022)Amanatidis, Birmpas, Filos{-}Ratsikas, and Voudouris]{amanatidis2022matching}
Georgios Amanatidis, Georgios Birmpas, Aris Filos{-}Ratsikas, and Alexandros~A. Voudouris.
\newblock A few queries go a long way: Information-distortion tradeoffs in matching.
\newblock \emph{Journal of Artificial Intelligence Research}, 74, 2022.

\bibitem[Amanatidis et~al.(2024)Amanatidis, Birmpas, Filos{-}Ratsikas, and Voudouris]{Amanatidis2024dice}
Georgios Amanatidis, Georgios Birmpas, Aris Filos{-}Ratsikas, and Alexandros~A. Voudouris.
\newblock Don't roll the dice, ask twice: The two-query distortion of matching problems and beyond.
\newblock \emph{{SIAM} Journal on Discrete Mathematics}, 38\penalty0 (1):\penalty0 1007--1029, 2024.

\bibitem[Anari et~al.(2023)Anari, Charikar, and Ramakrishnan]{anari2023matching}
Nima Anari, Moses Charikar, and Prasanna Ramakrishnan.
\newblock Distortion in metric matching with ordinal preferences.
\newblock In \emph{Proceedings of the 24th {ACM} Conference on Economics and Computation ({EC})}, pages 90--110, 2023.

\bibitem[Anshelevich and Sekar(2016)]{anshelevich2016blind}
Elliot Anshelevich and Shreyas Sekar.
\newblock Blind, greedy, and random: Algorithms for matching and clustering using only ordinal information.
\newblock In \emph{Proceedings of the 30th {AAAI} Conference on Artificial Intelligence ({AAAI})}, pages 390--396, 2016.

\bibitem[Anshelevich et~al.(2018)Anshelevich, Bhardwaj, Elkind, Postl, and Skowron]{anshelevich2018approximating}
Elliot Anshelevich, Onkar Bhardwaj, Edith Elkind, John Postl, and Piotr Skowron.
\newblock Approximating optimal social choice under metric preferences.
\newblock \emph{Artificial Intelligence}, 264:\penalty0 27--51, 2018.

\bibitem[Anshelevich et~al.(2021)Anshelevich, Filos-Ratsikas, Shah, and Voudouris]{distortion-survey}
Elliot Anshelevich, Aris Filos-Ratsikas, Nisarg Shah, and Alexandros~A. Voudouris.
\newblock Distortion in social choice problems: The first 15 years and beyond.
\newblock In \emph{Proceedings of the 30th International Joint Conference on Artificial Intelligence {(IJCAI)}}, pages 4294--4301, 2021.

\bibitem[Anshelevich et~al.(2022)Anshelevich, Filos{-}Ratsikas, and Voudouris]{AFV22}
Elliot Anshelevich, Aris Filos{-}Ratsikas, and Alexandros~A. Voudouris.
\newblock The distortion of distributed metric social choice.
\newblock \emph{Artificial Intelligence}, 308:\penalty0 103713, 2022.

\bibitem[Anshelevich et~al.(2024)Anshelevich, Filos{-}Ratsikas, Jerrett, and Voudouris]{anshelevich2024approvals}
Elliot Anshelevich, Aris Filos{-}Ratsikas, Christopher Jerrett, and Alexandros~A. Voudouris.
\newblock Improved metric distortion via threshold approvals.
\newblock In \emph{Proceedings of the 38th {AAAI} Conference on Artificial Intelligence ({AAAI})}, pages 9460--9468, 2024.

\bibitem[Benad{\`{e}} et~al.(2021)Benad{\`{e}}, Nath, Procaccia, and Shah]{benade2021participatory}
Gerdus Benad{\`{e}}, Swaprava Nath, Ariel~D. Procaccia, and Nisarg Shah.
\newblock Preference elicitation for participatory budgeting.
\newblock \emph{Management Science}, 67\penalty0 (5):\penalty0 2813--2827, 2021.

\bibitem[Boutilier et~al.(2015)Boutilier, Caragiannis, Haber, Lu, Procaccia, and Sheffet]{boutilier2015optimal}
Craig Boutilier, Ioannis Caragiannis, Simi Haber, Tyler Lu, Ariel~D. Procaccia, and Or~Sheffet.
\newblock Optimal social choice functions: A utilitarian view.
\newblock \emph{Artificial Intelligence}, 227:\penalty0 190--213, 2015.

\bibitem[Burkhardt et~al.(2024)Burkhardt, Caragiannis, Fehrs, Russo, Schwiegelshohn, and Shyam]{burkhardt2024low}
Jakob Burkhardt, Ioannis Caragiannis, Karl Fehrs, Matteo Russo, Chris Schwiegelshohn, and Sudarshan Shyam.
\newblock Low-distortion clustering with ordinal and limited cardinal information.
\newblock In \emph{Proceedings of the 38th {AAAI} Conference on Artificial Intelligence ({AAAI})}, pages 9555--9563, 2024.

\bibitem[Caragiannis et~al.(2017)Caragiannis, Nath, Procaccia, and Shah]{caragiannis2017subset}
Ioannis Caragiannis, Swaprava Nath, Ariel~D. Procaccia, and Nisarg Shah.
\newblock Subset selection via implicit utilitarian voting.
\newblock \emph{Journal of Artificial Intelligence Research}, 58:\penalty0 123--152, 2017.

\bibitem[Caragiannis et~al.(2022)Caragiannis, Shah, and Voudouris]{CSV22}
Ioannis Caragiannis, Nisarg Shah, and Alexandros~A. Voudouris.
\newblock The metric distortion of multiwinner voting.
\newblock \emph{Artificial Intelligence}, 313:\penalty0 103802, 2022.

\bibitem[Chan et~al.(2021)Chan, Filos-Ratsikas, Li, Li, and Wang]{facility-survey}
Hau Chan, Aris Filos-Ratsikas, Bo~Li, Minming Li, and Chenhao Wang.
\newblock Mechanism design for facility location problem: A survey.
\newblock In \emph{Proceedings of the 30th International Joint Conference on Artificial Intelligence ({IJCAI})}, pages 1--17, 2021.

\bibitem[Charikar et~al.(2024)Charikar, Ramakrishnan, Wang, and Wu]{charikar24breaking}
Moses Charikar, Prasanna Ramakrishnan, Kangning Wang, and Hongxun Wu.
\newblock Breaking the metric voting distortion barrier.
\newblock In \emph{Proceedings of the 35th ACM-SIAM Symposium on Discrete Algorithms ({SODA})}, pages 1621--1640, 2024.

\bibitem[Church and Drezner(2022)]{Church2022}
Richard~L. Church and Zvi Drezner.
\newblock Review of obnoxious facilities location problems.
\newblock \emph{Computers \& Operations Research}, 138:\penalty0 105468, 2022.

\bibitem[Ebadian et~al.(2022)Ebadian, Kahng, Peters, and Shah]{ebadian2022optimized}
Soroush Ebadian, Anson Kahng, Dominik Peters, and Nisarg Shah.
\newblock Optimized distortion and proportional fairness in voting.
\newblock In \emph{Proceedings of the 23rd ACM Conference on Economics and Computation ({EC})}, pages 523--600, 2022.

\bibitem[Feldman et~al.(2016)Feldman, Fiat, and Golomb]{feldman2016voting}
Michal Feldman, Amos Fiat, and Iddan Golomb.
\newblock On voting and facility location.
\newblock In \emph{Proceedings of the 2016 {ACM} Conference on Economics and Computation ({EC})}, pages 269--286, 2016.

\bibitem[Filos-Ratsikas et~al.(2014)Filos-Ratsikas, Frederiksen, and Zhang]{filos2014RP}
Aris Filos-Ratsikas, S{\o}ren Kristoffer~Stiil Frederiksen, and Jie Zhang.
\newblock {Social welfare in one-sided matchings: Random priority and beyond}.
\newblock In \emph{Proceedings of the 7th Symposium of Algorithmic Game Theory ({SAGT})}, pages 1--12, 2014.

\bibitem[Filos{-}Ratsikas et~al.(2020)Filos{-}Ratsikas, Micha, and Voudouris]{ratsikas2020distributed}
Aris Filos{-}Ratsikas, Evi Micha, and Alexandros~A. Voudouris.
\newblock The distortion of distributed voting.
\newblock \emph{Artificial Intelligence}, 286:\penalty0 103343, 2020.

\bibitem[Filos{-}Ratsikas et~al.(2024)Filos{-}Ratsikas, Kanellopoulos, Voudouris, and Zhang]{filos2024distributedFL}
Aris Filos{-}Ratsikas, Panagiotis Kanellopoulos, Alexandros~A. Voudouris, and Rongsen Zhang.
\newblock The distortion of distributed facility location.
\newblock \emph{Artificial Intelligence}, 328:\penalty0 104066, 2024.

\bibitem[Fotakis and Gourv{\`{e}}s(2022)]{fotakis2022aligned}
Dimitris Fotakis and Laurent Gourv{\`{e}}s.
\newblock On the distortion of single winner elections with aligned candidates.
\newblock \emph{Autonomous Agents and Multi Agent Systems}, 36\penalty0 (2):\penalty0 37, 2022.

\bibitem[Gkatzelis et~al.(2020)Gkatzelis, Halpern, and Shah]{gkatzelis2020resolving}
Vasilis Gkatzelis, Daniel Halpern, and Nisarg Shah.
\newblock Resolving the optimal metric distortion conjecture.
\newblock In \emph{Proceedings of the 61st Annual IEEE Symposium on Foundations of Computer Science ({FOCS})}, pages 1427--1438, 2020.

\bibitem[Jaworski and Skowron(2020)]{jaworski2020committees}
Michal Jaworski and Piotr Skowron.
\newblock Evaluating committees for representative democracies: the distortion and beyond.
\newblock In \emph{Proceedings of the 29th International Joint Conference on Artificial Intelligence ({IJCAI})}, pages 196--202, 2020.

\bibitem[Kizilkaya and Kempe(2022)]{kempe2022veto}
Fatih~Erdem Kizilkaya and David Kempe.
\newblock Plurality veto: {A} simple voting rule achieving optimal metric distortion.
\newblock In \emph{Proceedings of the 31st International Joint Conference on Artificial Intelligence ({IJCAI})}, pages 349--355, 2022.

\bibitem[Latifian and Voudouris(2024)]{latifian2024approval}
Mohamad Latifian and Alexandros~A. Voudouris.
\newblock The distortion of threshold approval matching.
\newblock In \emph{Proceedings of the 33rd International Joint Conference on Artificial Intelligence {(IJCAI)}}, 2024.

\bibitem[Procaccia and Rosenschein(2006)]{procaccia2006distortion}
Ariel~D. Procaccia and Jeffrey~S. Rosenschein.
\newblock The distortion of cardinal preferences in voting.
\newblock In \emph{International Workshop on Cooperative Information Agents ({CIA})}, pages 317--331, 2006.

\bibitem[Tamir(1991)]{Tamir91}
Arie Tamir.
\newblock Obnoxious facility location on graphs.
\newblock \emph{{SIAM} Journal on Discrete Mathematics}, 4\penalty0 (4):\penalty0 550--567, 1991.

\bibitem[Voudouris(2023)]{voudouris2023tight}
Alexandros~A. Voudouris.
\newblock Tight distortion bounds for distributed metric voting on a line.
\newblock \emph{Operations Research Letters}, 51\penalty0 (3):\penalty0 266--269, 2023.

\end{thebibliography}

\end{document}